\documentclass[10pt]{article}
\usepackage[letterpaper,margin=1.05in,includefoot,footskip=0.35in]{geometry}
\usepackage{amsmath,amssymb,amsthm}
\usepackage{graphicx}
\usepackage{booktabs}
\usepackage{float}
\usepackage[labelsep=period]{caption}
\usepackage{natbib}
\usepackage{microtype}
\usepackage[hidelinks]{hyperref}

\newtheorem{definition}{Definition}
\newtheorem{lemma}{Lemma}
\newtheorem{proposition}{Proposition}

\newcommand{\sem}[1]{\left[\!\left[#1\right]\!\right]}
\title{Executable Boundary Contracts for Sound Event Traces}
\author{
Faruk Alpay\\
Department of Computer Engineering\\
Bahcesehir University, Istanbul, Turkey\\
\texttt{faruk.alpay@bahcesehir.edu.tr}\\
\textit{Correspondence}\quad \texttt{alpay@lightcap.ai}
\and
Hamdi Alakkad\\
Department of Artificial Intelligence Engineering\\
Bahcesehir University, Istanbul, Turkey\\
\texttt{hamdi.alakkad@bahcesehir.edu.tr}
}
\date{}

\begin{document}
\maketitle

\begin{abstract}
Sound event reports often compress
timed boundary behavior into frame,
segment, or event scores. This paper
defines executable boundary contracts
for finite sound event traces. The
frame fragment is a bounded Boolean
fragment embeddable in STL after grid
projection. The event layer adds
declared interval matching, duration
clauses, fragmentation clauses, and
obligation restricted vector scoring.
The aim is measurement, not a new
general temporal logic and not a
challenge leaderboard. The artifact
evaluates controlled Mini LibriSpeech
seeded scenes, MAESTRO Real
soundscapes, frozen pretrained timing
probes, and an official DCASE 2024
Task 4 baseline track. Across these
tracks, standard scores and contract
coordinates disagree in interpretable
ways. The strongest real corpus
finding is that union activity can
hide typed boundary failure, while
external DCASE outputs provide a class
indexed challenge level reference.
Code, generated tables, manifests, and
Lean checks for the finite frame core
are supplied as ancillary material.

\end{abstract}

\section{Introduction}
Auditory boundary benchmarks are often
written as classifier evaluations. A
detector receives sound and emits a
label per frame or an interval list.
The report then gives frame F1,
segment F1, event F1, or a challenge
specific aggregate. Those numbers
summarize overlap, but the executable
object downstream is a trace.

The downstream object is a finite
timed trace. A speech gate opens and
closes. A sound event front end
triggers and releases. A retrieval
system receives a predicted segment. A
brain signal decoder aligns a neural
trace to a candidate audio passage. In
these cases a boundary error is not
only a loss of frame overlap. It is a
change in the trace contract offered
to the next component.

The central object is semantic and
operational. A benchmark has trace
obligations, accepted temporal
displacement, fragmentation policy,
silence protection, and matched event
comparison. A scalar metric
approximates that policy after the
fact. A formula states it before the
run.

The proposed contract calculus defines
a finite specification language for
that policy. The parser accepts atomic
predicates over reference and
prediction traces, bounded
neighborhood modalities, bounded
future modalities, bounded until,
negation, conjunction, disjunction,
and implication. It produces an
abstract syntax tree that is evaluated
over a fixed frame grid.

The language stays with finite words
and finite evidence. Its semantic
object is the trace emitted by a
detector, together with the
obligations that trace is expected to
satisfy.

The interval layer is equally
explicit. Maximal active runs become
half open intervals. A transparent
matching relation pairs reference and
prediction intervals. Duration and
fragmentation clauses are typed event
clauses over that relation. The frame
formulas and event clauses are
combined inside one finite contract
vector.

The empirical part of the paper is
built to stress the distinction. We
construct controlled auditory scenes
with Mini LibriSpeech utterance
segments, chirped tones, band limited
bursts, overlapping class events, and
unlabeled distractors. We perturb them
with noise, reverberation, clipping,
gain drift, and temporal jitter. The
paper also runs a MAESTRO Real
protocol over real TUT soundscapes
with crowdsourced soft strong labels.

Six local detectors are used. Two are
threshold systems. One is a balanced
logistic regression over frame
features. Two are clean trained
temporal convolutional networks with
different receptive fields. The sixth
detector is a boundary aware residual
temporal model trained on augmented
views. A separate frozen encoder probe
evaluates representations from
wav2vec2, wav2vec2 Conformer, AST, HTS
AT through CLAP, and BEATs under the
same contract monitor.

The benchmark reports conventional
frame and boundary scores, but it also
reports transition region F1, formula
satisfaction, tolerance sweeps, a
clean data learning curve, and class
macro scores. The transition region
score isolates frames near onsets and
offsets. The tolerance sweep shows how
rapidly conclusions change as the
accepted boundary radius changes. The
learning curve asks how the parsed
contract responds when the residual
dilated baseline receives less clean
training data.

The paper uses logic as measurement
infrastructure for sound. The object
is neither language generation nor
semantic recursion. It is a finite
auditory trace with typed boundaries
and checkable obligations.

The paper also sits inside speech and
sound evaluation. Large audio datasets
such as AudioSet, DESED, UrbanSound8K,
ESC, and LibriSpeech supply real
evaluation material
\citep{gemmeke2017audioset,serizel2020desed,salamon2014urbansound,piczak2015esc,panayotov2015librispeech}.
A recent MEG speech detection result
shows how a task can change when
detection is reframed as retrieval
followed by audio analysis
\citep{xiao2026bypassing}. Here the
boundary contract is the object under
inspection.

The technical outcome is a contract
calculus for boundary traces rather
than a general purpose temporal logic.
Its objects are words, edge
predicates, parsed formulas,
obligation masks, interval matches,
and guard vectors. Its empirical
outcome is a stress test that makes
those objects visible across detectors
and perturbations.

The contribution hierarchy is
intentionally narrow. The measurement
claim comes first. The formal
language, benchmark, and artifact
exist to support that claim. The paper
should therefore be read as a contract
based measurement study for sound
event traces rather than as a detector
leaderboard.

The paper makes four ordered claims.
The measurement claim is that boundary
behavior should be reported as an
executable vector before scalar
aggregation. The formal claim is that
a finite parsed contract can make that
vector inspectable. The empirical
claim is that the vector exposes
failures compressed by standard SED
scores. The artifact claim is that
every table is regenerated from source
rows.

\begin{enumerate}
\item A boundary measurement vector that separates onset, offset, missing activity, spurious activity, silence, duration, and fragmentation.
\item A parsed finite contract mechanism with bounded frame formulas, obligation masks, and matched event clauses.
\item Empirical evidence from controlled scenes, MAESTRO Real, frozen timing probes, and DCASE 2024 baseline outputs.
\item A reproducibility artifact with tokenizer, parser, monitor, benchmark scripts, generated result tables, and finite frame Lean checks.
\end{enumerate}
\section{Opening trace example}
A finite trace contract is easiest to
see on one small boundary error.
Suppose the reference event occupies
one second to two seconds and the
prediction occupies one point zero six
seconds to two point four seconds.
Frame overlap is high, but the release
is late and the predicted duration is
too long.

A conventional frame score reads the
broad overlap. The contract vector
reads separate obligations. Onset is
satisfied under a loose sixty
millisecond tolerance. Offset fails
under an eighty millisecond tolerance.
Duration fails because the matched
prediction is much longer than the
reference interval. Silence can also
fail because the late tail occupies
protected background.

This example is the contract calculus
in miniature. The frame fragment
checks edge formulas on obligated
frames. The event sort checks interval
shape after matching. The report keeps
the coordinates separate.

\begin{table}[H]
\centering
\small
\begin{tabular}{lcccccc}
\toprule
Time in seconds & 1.00 & 1.06 & 2.00 & 2.08 & 2.40 & Reading \\\midrule
Reference & on & on & off & off & off & event support \\
Prediction & off & on & on & on & off & late tail \\
Contract & onset ok & onset ok & offset fail & offset fail & silence fail & vector evidence \\
\bottomrule
\end{tabular}
\caption{Small trace example separating overlap from boundary obligations.}
\label{tab_opening_trace}
\end{table}
\section{Related work}
Timed logics were introduced to reason
about computations whose correctness
depends on quantitative time
\citep{alur1993real,alur1996benefits}.
Their vocabulary is broader than the
present language. The relevant
inheritance is the treatment of timing
constraints as syntax with semantics.

Signal temporal logic moved temporal
specification toward sampled and
continuous signals
\citep{maler2004monitoring}. Robust
satisfaction then made it possible to
measure margins rather than only
Boolean truth \citep{donze2010robust}.
Runtime verification established the
monitor as an executable object that
consumes traces and returns verdicts
\citep{bartocci2018lectures}. The
contract calculus keeps this monitor
discipline while restricting the
universe to auditory boundary traces
and class indexed masks.

Sound event detection has a mature
metric literature. Event based and
segment based metrics handle boundary
tolerance and polyphonic scenes
\citep{mesaros2016metrics,mesaros2019dcase}.
TUT Sound Events supplies real
annotated soundscapes for this line of
evaluation \citep{mesaros2016tut}. The
present work adds a specification
layer above those metrics. It exposes
the obligations first, then reports
scores over those obligations.

AudioSet gave large scale weak labels
over an ontology of sound events
\citep{gemmeke2017audioset}. DESED
provided synthetic domestic scenes
with event timing
\citep{serizel2020desed}. ESC and
UrbanSound8K provided compact
environmental corpora
\citep{piczak2015esc,salamon2014urbansound}.
These resources show how varied sound
events are. They also show why a
single boundary policy is rarely
adequate.

Speech corpora and learned speech
representations form a second
background. LibriSpeech made public
domain read speech a standard resource
\citep{panayotov2015librispeech}. The
reported run uses Mini LibriSpeech
from OpenSLR as the speech source
\citep{openslr31}. MAESTRO Real gives
long real soundscapes with multi
annotator soft strong labels
\citep{maestroreal2023,martinmorato2023soft}.
wav2vec 2.0 demonstrated the power of
self supervised speech representations
\citep{baevski2020wav2vec}. Conformer
added convolutional locality to
transformer speech models
\citep{gulati2020conformer}. Strong
representations can improve detection,
but they do not remove the need to
define the boundary contract.

Classical frame features provide
transparent baselines. Energy, zero
crossing rate, spectral center, and
band ratios can be computed without
pretrained models. The use of spectral
speech features has a long history in
parametric speech recognition
\citep{davis1980comparison}. Their
errors are interpretable.

Recent sound event detection work
makes the boundary itself a modeling
target. Boundary aware optimization
and inference explicitly model event
onsets and offsets for SED
\citep{schmid2026boundary}. DCASE 2024
Task 4 made heterogeneous SED with
MAESTRO Real a shared challenge object
and released a baseline checkpoint
with postprocessed outputs
\citep{cornell2024dcase,dcase2024baselinecheckpoint}.
Audio Spectrogram Transformer and AST
SED show how transformer patch models
enter audio tagging and localization
\citep{gong2021ast,li2023astsed}. HTS
AT and BEATs give two modern general
audio backbones with different
pretraining and token structure
\citep{chen2022htsat,chen2022beats}.
CLAP supplies the HTS AT branch used
in the encoder probe
\citep{wu2023clap}. Feature level
augmentation remains a central route
to robustness in speech and audio
modeling \citep{park2019specaugment}.
The probe uses these backbones as
frozen encoders and gives the same
trace monitor their predictions.

Crowdsourced strong labeling is
central to real boundary evidence. The
MAESTRO label construction follows
work on estimating strong labels from
weak multi annotator decisions
\citep{martinmorato2021crowdsourcing,martinmorato2023soft}.
This matters for logic because a
reference is no longer only a crisp
interval set. It can be a soft event
field with uncertainty at one second
resolution.

The implementation relies on SciPy for
signal operations
\citep{virtanen2020scipy}, scikit
learn for the logistic baseline
\citep{pedregosa2011sklearn}, PyTorch
for the convolutional model
\citep{paszke2019pytorch}, and Adam
for optimization
\citep{kingma2015adam}. These
dependencies keep the acoustic
workload reproducible.

The closest methodological neighbor is
the practice of making evaluation
executable. A parsed monitor fixes the
meaning of every formula that appears
in the benchmark. That is the
difference between a verbal tolerance
policy and a checkable one.

\begin{table}[H]
\centering
\small
\resizebox{0.94\linewidth}{!}{\begin{tabular}{p{0.21\linewidth}p{0.20\linewidth}p{0.24\linewidth}p{0.25\linewidth}}\toprule
Standard report & SED object & Contract object & Added diagnostic \\\midrule
segment or frame F1 & activity mask & frame obligation score & shows which active frames satisfy an explicit formula \\
event F1 with collars & paired event intervals & onset and offset guard coordinates & keeps onset and release failure separate \\
polyphonic class macro score & class indexed masks & macro contract over typed traces & exposes class substitution hidden by union activity \\
duration or collar policy & interval matching rule & matched event clauses & records the matcher and tolerance as part of the score \\
soft strong label analysis & soft event field & soft boundary witnesses beside crisp verdicts & separates label uncertainty from detector behavior \\
operating threshold sweep & decoded submission family & guard vector under fixed formulas & shows whether thresholding repairs timing or only overlap \\
\bottomrule\end{tabular}
}
\caption{Mapping between common SED reports and the contract objects used here.}
\label{tab_metric_mapping}
\end{table}
The mapping table places the work
beside DCASE and DESED style reports.
The contract language does not replace
those reports. It records the boundary
policy that a report often leaves
implicit.

The benchmark also computes standard
segment and event collar scores on the
same decoded traces used by the
monitor. This makes the comparison
experimental rather than only
conceptual.

\IfFileExists{results/table_standard_metric_comparison.tex}{
\begin{table}[H]
\centering
\small
\resizebox{0.88\linewidth}{!}{\begin{tabular}{lrrrrr}\toprule
Detector & Frame F1 & Segment F1 & Event F1 & Boundary F1 & Logic \\\midrule
adaptive\_energy & 0.504 & 0.568 & 0.232 & 0.310 & 0.514 \\
spectral\_flux & 0.614 & 0.739 & 0.225 & 0.358 & 0.560 \\
logistic\_features & 0.533 & 0.735 & 0.215 & 0.362 & 0.470 \\
temporal\_cnn & 0.607 & 0.760 & 0.294 & 0.374 & 0.487 \\
dilated\_cnn & 0.656 & 0.789 & 0.367 & 0.408 & 0.522 \\
contract\_tcn\_aug & 0.889 & 0.927 & 0.705 & 0.829 & 0.802 \\
\bottomrule\end{tabular}
}
\caption{Standard SED scores and contract scores on identical controlled submissions.}
\label{tab_standard_metric_comparison}
\end{table}
}{}
\section{Formal positioning}
The language is narrower than Signal
Temporal Logic. It keeps finite binary
boundary traces, class indexed masks,
and matched intervals as first class
benchmark objects. STL covers a wider
signal universe and supplies robust
real valued satisfaction. The present
paper trades that breadth for a
compact contract language whose
formulas, obligations, and event
clauses can be printed in benchmark
tables.

The frame fragment can be embedded in
a finite STL reading over Boolean
signals after every time bound is
mapped to the declared grid radius.
STL is the ambient comparison. The
object introduced here is not a new
general temporal logic family. It is a
domain specific contract calculus
whose semantic unit is an audited
vector coupling parsed frame formulas,
obligation masks, source spans,
interval matching, and event clauses.

One could implement a related audit as
STL formulas followed by custom
scripts. The difference is
specification status. In this paper
obligation masks, source spans,
declared matchers, event clauses, and
vector reporting are part of the
benchmark object that is stored,
parsed, executed, and reproduced. They
are not informal postprocessing added
after a scalar metric has already been
chosen.

The STL embeddable part contains the
Boolean frame formulas over the
sampled grid. The benchmark specific
part contains obligation masks, source
span accounting, event matchers,
duration clauses, fragmentation
clauses, and the selected contract
vector. These parts are deliberately
kept separate in the implementation
and in the tables.

The event sort is the point at which
the object leaves a plain frame
formula fragment. Duration,
fragmentation, and class indexed
interval identity are not treated as
informal analysis after monitoring.
They are executable coordinates of the
same finite contract and can be
selected or replaced by the
calibration procedure.

The formal claims are split into
semantic claims and implementation
assurances. The semantic claims
concern obligation restricted scoring,
typed interval clauses, locality,
monotonicity of neighborhood guards,
and bounded delay monitoring. The
implementation assurances concern
deterministic lexing, deterministic
parsing, finite evaluation, and source
span reporting.

This separation matters because small
implementation facts should not be
read as deep temporal logic theorems.
Their role is to make the artifact
auditable. The research claim is the
combination of a scoped trace
language, typed interval monitoring,
and empirical evidence that different
contract coordinates select different
detector behavior.

\begin{table}[H]
\centering
\small
\resizebox{0.92\linewidth}{!}{
\begin{tabular}{lll}\toprule
Axis & Finite trace contracts & STL and runtime verification \\\midrule
Signal object & finite boundary masks and class masks & sampled or dense valued signals \\
Frame expressivity & bounded Boolean fragment with until & wider temporal signal language \\
Time model & grid radius after ceiling projection & dense or sampled time by formal choice \\
Event structure & matched intervals as event clauses & usually encoded through predicates \\
Scoring object & obligation vector before scalar mean & Boolean or robust satisfaction \\
Excluded features & unbounded time and continuous robustness & broad temporal specification family \\
\bottomrule\end{tabular}}
\caption{Scope of the proposed finite trace language relative to STL and runtime verification.}
\label{tab_stl_position}
\end{table}
The scope table states the intended
tradeoff. The implementation facts are
kept as assurance checks rather than
as the center of the theory. The
semantic center is the contract
vector, the obligation masks, and the
event clauses over an explicit
matcher.

\begin{table}[H]
\centering
\small
\resizebox{0.90\linewidth}{!}{
\begin{tabular}{lll}
\toprule
Claim class & Role in the paper & Evidence \\\midrule
Semantic contribution & contract vector with obligation masks and event clauses & definitions and monitor equations \\
Benchmark calculus & selected coordinates over a declared matcher & calibration and matcher audits \\
Implementation assurance & tokenizer, parser, finite loops, source spans & Python tests and Lean core checks \\
Empirical evidence & standard metrics and contract coordinates diverge & controlled, MAESTRO, and DCASE tracks \\
\bottomrule
\end{tabular}}
\caption{Separation between semantic claims and implementation assurance.}
\label{tab_claim_separation}
\end{table}
This separation prevents the parser
and tokenizer invariants from carrying
more novelty than they should. They
are necessary for an executable
benchmark. The main formal object
remains the two sorted contract
vector.

\section{Trace objects}
An accepted formula string is a source
expression that the tokenizer and
parser consume completely. Source
spans are half open character offsets
attached to tokens. They let the
implementation report where a
malformed contract fails.

A guard vector is the ordered list of
contract coordinates returned by the
monitor. For example, onset guard,
silence guard, and fragmentation guard
can move in different directions for
the same detector. The vector is
reported before any scalar mean.

The soft boundary layer is a separate
distance based report for fuzzy edges.
It gives partial credit to near onsets
and offsets, while the Boolean guard
still records whether the declared
tolerance was satisfied.

A benchmark item is represented by a
duration, a frame step, a reference
mask, and a predicted mask. The
waveform and features are used to
produce the predicted mask, but the
monitor receives only the finite
trace. This separation prevents
acoustic modeling choices from leaking
into logical evaluation.

Let $h$ be the frame step and let $n$
be the number of frames. A trace is a
word in $\{0,1\}^n$. The frame center
at index $i$ is $(i+1/2)h$. Active
runs become half open intervals. The
half open convention matches array
slicing and avoids double ownership of
shared endpoints.

Two derived edge predicates are
defined. An onset occurs when an
active frame follows an inactive frame
or the left boundary of the trace. An
offset occurs when an inactive frame
follows an active frame. These
predicates are frame predicates and
can be used by the parser.

The trace object does not encode
source identity, language, speaker
type, or geography. It encodes only
activity at time. This keeps the
benchmark from importing arbitrary
priors through labels that are
irrelevant to the boundary contract.

The interval extraction map is
deterministic. It scans the mask once
and returns maximal active intervals.
A configurable merge gap can combine
runs separated by very short silence.
The merge gap is not hidden. It is a
parameter of the monitor and appears
in the source.

Prediction intervals are matched to
reference intervals by a one to one
relation. Candidate pairs must overlap
and must have a nearby boundary under
a broad search rule. Candidates are
sorted by boundary error with overlap
as a reward. The greedy match is
reported as part of the finite
structure that the event clauses
inspect.

The matcher is a declared policy
rather than a hidden metric detail.
The reference code also contains an
exact maximum cardinality minimum cost
matcher for bounded audit cases. The
paper reports the greedy policy
because it is deterministic, local,
and easy to inspect in ambiguous
boundary neighborhoods.

Greedy and optimal matching differ
when a short prediction overlaps two
reference intervals or when two
predictions compete for one reference
interval. In those cases boundary F1,
duration guard, and fragmentation
guard can change. The matcher family
is therefore part of contract
selection and not an implementation
accident.

The trace interface is corpus
agnostic. In the reported run, speech
events are seeded by Mini LibriSpeech
audio and non speech events are
generated with known interval support.
MAESTRO Real annotations feed the same
masks into the monitor after their
soft labels are projected onto the
frame grid.

\begin{definition}[Finite auditory trace]
A finite auditory trace is a pair $(x,h)$ where $x \in \{0,1\}^n$ and $h>0$.
The frame center at index $i$ is $(i+1/2)h$.
\end{definition}

\begin{definition}[Run interval]
For a trace $(x,h)$, a run interval is a maximal half open interval $[ih,jh)$ such that $x_k=1$ for every $i \leq k < j$.
\end{definition}

\begin{definition}[Interval candidate relation]
Let $R$ and $P$ be finite reference and prediction interval families.
A pair $(r,p)\in R\times P$ is a candidate when $r$ and $p$ overlap and at least one endpoint differs by at most $3\epsilon$.
Its cost is
\[
\kappa(r,p)=|r_0-p_0|+|r_1-p_1|-|r\cap p|.
\]
The reported matcher orders candidates by $\kappa$ and greedily keeps the next pair whose reference and prediction intervals are still unmatched.
\end{definition}
\begin{table}[H]
\centering
\small
\resizebox{0.88\linewidth}{!}{\begin{tabular}{lrrrl}\toprule
Case & Greedy matches & Exact matches & Boundary shift & Changed coordinates \\\midrule
separated & 2 & 2 & 0.000 & none \\
bridge prediction & 1 & 2 & 0.500 & boundary f1, fragmentation guard \\
split support & 1 & 1 & 0.000 & none \\
\bottomrule\end{tabular}
}
\caption{Finite audit cases where greedy and exact interval matching agree or diverge.}
\label{tab_matching_policy}
\end{table}
The matching audit makes the policy
dependence concrete. The bridge case
shows that a local greedy decision can
reduce both event recall and
fragmentation satisfaction. The
matcher is therefore a declared part
of the contract rather than a theorem
about all possible matchings.

\begin{table}[H]
\centering
\small
\resizebox{0.88\linewidth}{!}{\begin{tabular}{lrrrrr}\toprule
Pattern & Cases & Changed & Mean $\Delta$ BF1 & Max $|\Delta|$ BF1 & Mean $\Delta$ event \\\midrule
left bridge & 12 & 0.667 & 0.333 & 0.500 & 0.167 \\
nominal & 4 & 0.000 & 0.000 & 0.000 & 0.000 \\
right bridge & 4 & 0.000 & 0.000 & 0.000 & 0.000 \\
split & 4 & 0.000 & 0.000 & 0.000 & 0.000 \\
\bottomrule\end{tabular}
}
\caption{Systematic interval stress track for greedy and optimal matching policies.}
\label{tab_matching_policy_stress}
\end{table}
The stress track explains why the
matcher policy belongs in the
benchmark. Nominal and split cases are
stable, while left bridge cases
activate policy dependence in two
thirds of the tested configurations.
The observed bridge probe then derives
binary bridge traces from the actual
controlled test references and
confirms that the policy can activate
inside the reported data geometry.

\section{Two sorted contracts}
The logic has a frame sort and an
event sort. The frame sort evaluates
formulas over Boolean arrays on the
grid. The event sort evaluates clauses
over reference intervals, prediction
intervals, and a finite matching
relation.

A contract is a finite list of
clauses. A frame clause contains a
parsed formula and a parsed obligation
formula. An event clause contains an
event obligation and an event
predicate. The monitor evaluates every
clause and returns the resulting
vector.

This design makes the boundary between
frame logic and interval matching
explicit. Matching is a finite
relation inside the event sort.
Duration and fragmentation are event
predicates over that relation in the
same contract vector as onset and
silence.

The two sorts are needed because
boundary traces carry two kinds of
structure. Frame formulas express
local temporal obligations on sampled
atoms. Event clauses express support
shape after maximal runs have been
extracted.

The synthesis interface is the same
for both sorts. A detector gives a
reference trace and a prediction
trace. The monitor derives atoms,
intervals, matches, obligations,
predicates, and numeric clause values.

A new latency clause or continuity
clause is added by declaring its
obligation and predicate. It does not
require a new detector and it does not
require a change to the grammar used
for frame formulas.

The contract vector is therefore a
typed finite object. It is not a
bundle of unnamed metrics. Every
coordinate has a source formula or
source clause.

\begin{definition}[Two sorted contract]
Let $W=(E,R,P,M,h)$ where $E$ is a frame environment, $R$ and $P$ are finite interval families, $M \subseteq R \times P$ is a matching relation, and $h$ is the frame step.
A contract clause is either a pair $(\varphi,\omega)$ over $E$ or a pair $(q,p)$ over $(R,P,M)$.
\end{definition}

\begin{definition}[Event clause value]
For an event obligation $q$ and an event predicate $p$, the event clause value is the mean of $p$ over the finite set selected by $q$.
The empty value is one only when the obligation set is empty and no counter evidence is present.
\end{definition}
The event sort has its own small
syntax table. The table is not a new
parser grammar. It records the finite
event obligations and predicates that
are evaluated after interval
extraction and declared matching.

\begin{table}[H]
\centering
\small
\resizebox{0.94\linewidth}{!}{
\begin{tabular}{llll}
\toprule
Coordinate & Event obligation & Event predicate & Value domain \\\midrule
duration\_guard & matched pair $(r,p)\in M$ & $||r|-|p||\leq 2\epsilon$ & mean over matched pairs \\
fragmentation\_guard & reference interval $r\in R$ & $r$ is matched and covered by at most one prediction & mean over references \\
latency extension & reference onset interval $r\in R$ & first predicted onset lies in an asymmetric window & optional coordinate \\
overlap purity extension & predicted class interval $p_c$ & dominant reference overlap has class $c$ & optional coordinate \\
\bottomrule
\end{tabular}}
\caption{Event sort obligations and predicates used by the contract layer.}
\label{tab_event_sort}
\end{table}
Changing the matcher changes only
event coordinates whose obligations
depend on the matching relation. Frame
formulas are unchanged because their
atoms are derived directly from masks.
This is why the matcher audit reports
boundary, duration, and fragmentation
shifts while leaving parser semantics
intact.

\section{Parsed formula language}
The frame language is parsed from
strings. Benchmark specifications
become artifacts rather than prose. A
formula such as \texttt{ref\_onset ->
N[0.04] pred\_onset} can be stored,
parsed, evaluated, and audited.

The grammar has atoms, negation,
conjunction, disjunction, implication,
bounded until, and three unary bounded
modalities. The neighborhood operator
$N_\epsilon$ checks both directions
around the current frame. The future
operator $F_\epsilon$ checks from the
current frame forward. The bounded
always operator $G_\epsilon$ checks a
finite future window. The bounded
until operator $U_\epsilon$ requires a
right witness inside the time bound
while the left formula holds up to
that witness.

Given frame step $h$, every temporal
radius uses $r_\epsilon=\lceil
\epsilon/h\rceil$. Boundary windows
are clipped to the available frame
interval. This choice is conservative
on the sampled grid because it uses
the least integer radius whose
physical span is not smaller than the
written tolerance.

The parser is recursive descent with
precedence layers. Implication has the
lowest precedence and associates to
the right. Disjunction binds tighter
than implication. Conjunction binds
tighter than disjunction. Negation and
bounded modalities bind at the unary
level.

Every parsed formula is evaluated into
a Boolean array over the frame grid.
Formula satisfaction can then be
averaged over an obligation mask. This
avoids the usual problem where an
implication looks nearly perfect
because most frames do not satisfy the
antecedent.

The obligation mask is itself parsed.
For onset guard the obligation is
\texttt{ref\_onset}. For silence guard
the obligation is
\texttt{pred\_active}. The score is
the mean value of the parsed formula
only where its obligation is active.

The parser handles the frame language
while duration and fragmentation
remain interval clauses. The
separation follows the two sorts.
Until and the unary bounded modalities
range over frames. Duration ratios and
fragmentation counts range over
matched intervals.

The language is compact in surface
syntax and nontrivial in monitoring
behavior. Nested modalities, bounded
until, implication under obligation
masks, and class indexed environments
are all parsed through the same tree
interface.

The benchmark therefore has a stable
logical interface. Users can add
formulas without editing detector
code. They can inspect formula strings
in the source and reproduce their
semantics through the parser.

\[
\varphi \in \{ a
\mid \neg \varphi
\mid \varphi \wedge \psi
\mid \varphi \vee \psi
\mid \varphi \Rightarrow \psi
\mid \varphi\, U_{\epsilon}\,\psi
\mid N_{\epsilon}\varphi
\mid F_{\epsilon}\varphi
\mid G_{\epsilon}\varphi \}.
\]

\begin{definition}[Frame semantics]
For an environment $E$ that maps atoms to Boolean arrays, the valuation $\sem{\varphi}_E$ is a Boolean array on the same frame grid.
The Boolean connectives are pointwise.
Let $r_\epsilon=\lceil \epsilon/h\rceil$.
The operator $N_{\epsilon}$ returns true at frame $i$ when the child formula is true at some frame $j$ with $|i-j| \leq r_\epsilon$ and $0\leq j<n$.
The operator $F_{\epsilon}$ checks the clipped future window $i\leq j\leq \min(n-1,i+r_\epsilon)$.
The operator $G_{\epsilon}$ requires every frame in that clipped future window to satisfy the child formula.
The operator $\varphi U_{\epsilon}\psi$ returns true at frame $i$ when $\psi$ is true at some frame $j$ with $0 \leq j-i \leq r_\epsilon$ and $\varphi$ is true at every frame from $i$ up to $j$.
\end{definition}

\begin{lemma}[Implementation invariant for parser]
Every accepted formula string has a unique abstract syntax tree under the implemented precedence rules.
\end{lemma}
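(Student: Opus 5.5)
The plan is to make the grammar that the recursive descent parser implements explicit, and then show that it is unambiguous. Unambiguity is equivalent to the claim that every accepted string has exactly one abstract syntax tree. Following the precedence layers described in the parsed formula section, I would write the stratified grammar
\[
I ::= D \mid D \Rightarrow I,\quad D ::= C \mid D \vee C,\quad C ::= U \mid C \wedge U,\quad U ::= \neg U \mid N_\epsilon U \mid F_\epsilon U \mid G_\epsilon U \mid A,
\]
with $A$ an atom, a parenthesized $I$, or a bounded until form. Until would sit at whatever layer the code places it, given its own nonterminal with a fixed associativity. In code, the left recursion in $D$ and $C$ is a loop that folds leftward, and the right recursion in $I$ is a recursive call. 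I will make that correspondence precise first, because the lemma is about the implemented rules and not about an abstract grammar.

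The first step uses the deterministic lexing assurance. The tokenizer maps each accepted string to a unique token sequence, so the problem reduces to the uniqueness of parse trees over a fixed token word. The second step shows, for each nonterminal, that the grammar is LL(1) after the standard left-factoring. At every decision point the next token alone determines the production.
\begin{itemize}
\item In $U$, the prefix $\neg$, $N$, $F$, $G$, an opening parenthesis, or an atom token selects exactly one alternative.
\item In the loops for $C$, $D$, and $I$, the presence or absence of $\wedge$, $\vee$, or $\Rightarrow$ decides whether to continue. These tokens are disjoint from the FIRST sets of the lower layers and from the closing parenthesis.
\end{itemize}
Since the parser is deterministic, each input yields at most one derivation. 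Requiring the parser to consume the whole string makes that derivation the unique one for the accepted string.

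The third step would be a strong induction on token length. The claim is that if a token word $w$ is derived from nonterminal $X$, then the tree is unique. The induction unfolds as follows.
\begin{itemize}
\item For $I$, the first $\Rightarrow$ at parenthesis depth zero must be the root split. A $D$-word contains $\Rightarrow$ only inside parentheses, which forces right associativity.
\item For $D$, the last top-level $\vee$ is the root split, which gives left association. The same argument applies to $C$ with the last top-level $\wedge$.
\item For $U$, the leading token fixes the constructor. A parenthesized word reduces to an $I$-word that is strictly shorter.
\end{itemize}
The subwords are strictly shorter in every case, so the induction hypothesis applies.

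The main obstacle will be bounded until, together with the bracketed radius in $N[0.04]$. The radius tokens must not collide with parentheses or atoms. Until also has to be assigned a precedence and associativity in the implementation, for example as a binary layer between $C$ and $U$ folded left. Otherwise an expression like $a\,U\,b\,U\,c$ would be ambiguous. I would first read off the implemented choice and add it as a layer in the same stratified pattern. Then I would check that the $U_\epsilon$ token is disjoint from the FIRST and FOLLOW sets of the surrounding layers, so the top-level-split argument carries over unchanged.
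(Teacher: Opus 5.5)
Your proposal is correct and follows essentially the same route as the paper. Both arguments rest on three points: the tokenizer gives a unique token stream; the layered recursive descent is deterministic, since each level (implication, disjunction, conjunction, unary) either consumes its own operator or returns the node already built; and parentheses trigger a complete recursive parse, with full consumption up to the end marker. Your explicit stratified grammar, your top-level-split induction, and your observation that bounded until needs an assigned layer and associativity go beyond the paper's short proof, which never mentions until. They do not change the argument.
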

\begin{proof}
The maximal munch tokenizer produces a unique token sequence.
The recursive descent parser consumes implication, disjunction, conjunction, and unary expressions in fixed order.
At each level the parser either consumes the next operator of that level or returns the node already built.
Parentheses invoke a complete recursive parse and then require the matching closing token.
Thus no accepted token sequence has two parses under the implementation.
\end{proof}

\begin{proposition}[Implementation bound for frame monitoring]
For a fixed formula and a trace with $n$ frames, evaluation time is $O(kn)$ where $k$ is the number of parsed nodes.
\end{proposition}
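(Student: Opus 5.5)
The proof goes by structural induction over the abstract syntax tree given by the parser invariant (Lemma on the parser). Each node is evaluated exactly once, in postorder, into a Boolean array of length $n$. It then suffices to show that every node costs $O(n)$ once its children's arrays are available. Summing over the $k$ nodes gives $O(kn)$. The radius $r_\epsilon=\lceil \epsilon/h\rceil$ is computed once per modal node in $O(1)$.

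The base and Boolean cases are immediate. An atom is a lookup in $E$ followed by at most a copy of an $n$-array. The connectives $\neg,\wedge,\vee,\Rightarrow$ are pointwise, hence one pass of length $n$.

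For the windowed modalities $N_\epsilon$, $F_\epsilon$ and $G_\epsilon$, the naive window scan costs $O(r_\epsilon n)$. That is acceptable only if $r_\epsilon$ is regarded as a constant of the fixed formula. Since $r_\epsilon$ grows as $h\to 0$, I would instead make the bound independent of the radius with a prefix count. Let $S_0=0$ and $S_{j+1}=S_j+\sem{\varphi}_E(j)$, computed in one pass. Then at frame $i$:
\begin{itemize}
\item $N_\epsilon\varphi$ holds iff $S_{\min(n,i+r_\epsilon+1)}-S_{\max(0,i-r_\epsilon)}>0$;
\item $F_\epsilon\varphi$ holds iff $S_{\min(n,i+r_\epsilon+1)}-S_i>0$;
\item $G_\epsilon\varphi$ holds iff that same difference equals the window length $\min(n-1,i+r_\epsilon)-i+1$.
\end{itemize}
Each is $O(1)$ per frame after the $O(n)$ prefix pass. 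The clipping of windows in the frame semantics is handled exactly by these $\min$ and $\max$ terms.

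The main obstacle is bounded until, because it quantifies over a witness $j$ together with a universal condition on the whole stretch $[i,j]$. I would reduce it to two backward scans:
\begin{itemize}
\item $\mathrm{nx}_\psi(i)$, the least $j\ge i$ with $\psi$ true at $j$ (or $\infty$);
\item $\mathrm{bk}_\varphi(i)$, the least $j\ge i$ with $\varphi$ false at $j$ (or $n$).
\end{itemize}
Both satisfy one-step recurrences from $i+1$ to $i$, so each is computed in $O(n)$. The key observation is that the earliest $\psi$-witness is always the best one. If some witness $j$ satisfies the window bound and has $\varphi$ true on $[i,j]$, then $\mathrm{nx}_\psi(i)\le j$ inherits both properties. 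Hence $\varphi\,U_\epsilon\,\psi$ holds at $i$ iff
\[
\mathrm{nx}_\psi(i)\le \min(n-1,i+r_\epsilon) \quad\text{and}\quad \mathrm{nx}_\psi(i)<\mathrm{bk}_\varphi(i).
\]
If the implementation reads ``up to $j$'' as exclusive of $j$, the second condition is replaced by $\mathrm{nx}_\psi(i)\le\mathrm{bk}_\varphi(i)$. This is an $O(1)$ test per frame, so until also costs $O(n)$ per node, and the induction closes.
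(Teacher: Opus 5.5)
Your proof is correct and follows the paper's argument. Both bound each of the $k$ parsed nodes by $O(n)$ and sum: atoms are lookups, Boolean connectives are pointwise, and the windowed modalities use prefix sums over the child array, so the cost does not depend on $r_\epsilon$. Your earliest-witness reduction of $\varphi\,U_\epsilon\,\psi$ to the two backward scans $\mathrm{nx}_\psi$ and $\mathrm{bk}_\varphi$ spells out a step that the paper only asserts by listing until among the modalities handled with prefix sums.
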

\begin{proof}
Atoms are array lookups.
Boolean nodes apply pointwise array operations.
Each bounded modality is evaluated by prefix sums over the child truth array.
The formula size is fixed for a benchmark specification, so the monitor is linear in the number of frames up to the formula constant.
\end{proof}
\section{Lexical layer}
The parser is preceded by a
deterministic tokenizer. This layer is
not a convenience detail. It fixes the
interface between a written benchmark
contract and the abstract syntax tree
that the monitor evaluates.

The tokenizer reads the source string
from left to right. Whitespace is
discarded. Identifiers and numbers are
read by character classes. Operators
are read by a prefix trie with maximal
munch. Reserved unary modality names
are classified as temporal tokens only
when the whole identifier is exactly
that name.

Each token carries a kind, a value,
and a half open character span. Spans
make syntax errors local and make a
formula auditable without inspecting
the parser state. Malformed benchmark
contracts fail at the lexical or parse
layer.

The lexical design differs from
natural language tokenization. It does
not learn merges and it does not adapt
compression to a corpus. Modern
tokenizer research shows why
segmentation choices can change a
model and its compute profile
\citep{slagle2024spacebyte,deng2026byteflow,gigant2026decoupling,limisiewicz2026compute}.
A formal benchmark language instead
needs a total deterministic map from
source to tokens.

The operator trie is used for the same
reason that maximal munch appears in
ordinary compiler lexers. If two
operators share a prefix, the longer
valid operator is selected. This gives
local optimality without a keyword
cascade.

Numbers are scanned as finite decimal
literals. The parser converts them to
seconds and the evaluator maps seconds
to frame radii. Keeping the numeric
unit in the surface language avoids
hidden frame constants in formulas.

The token stream ends with an explicit
end marker. The parser requires that
marker after the implication level has
been parsed. Extra tokens therefore
become syntax errors rather than
ignored suffixes.

The resulting front end has three
separations. Lexing handles character
spans. Parsing handles precedence and
tree shape. Monitoring handles trace
semantics. This is the minimal
structure needed for a specification
language that is meant to be edited by
benchmark authors.

\begin{definition}[Lexical token]
A lexical token is a quadruple $(\kappa,v,s,e)$ where $\kappa$ is a token kind, $v$ is the source value, and $[s,e)$ is a source span.
\end{definition}

\begin{lemma}[Implementation invariant for tokenizer]
For the declared operator alphabet and reserved word map, the tokenizer returns a unique token stream for every accepted source string.
\end{lemma}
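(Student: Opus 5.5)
The plan is to show that the tokenizer is a deterministic function of the source string: at each cursor position, its next action is fixed by the character under the cursor and a few fixed tables. Uniqueness of the output stream then follows by induction on the cursor position.

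We model the tokenizer as a left-to-right scan with a cursor $c\in\{0,\dots,|s|\}$ over the source string $s$. We prove, by strong induction on $|s|-c$, that the suffix of the token stream emitted from $c$ onward is uniquely determined by $s$ and $c$. Together with the explicit end marker appended at $c=|s|$, the case $c=0$ gives the lemma. Along the way we record that every emitted token $(\kappa,v,s_0,e)$ has $v=s[s_0:e)$ with $s_0<e$. Hence each step strictly advances the cursor, so the induction is well founded and the scan terminates.

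The inductive step is a case split on the character $s_c$. We use the fact that the declared character classes (whitespace, identifier start, digit, operator start) are pairwise disjoint, so exactly one branch applies, or none, in which case the string is rejected and is not accepted.
\begin{itemize}
\item \emph{Whitespace.} The cursor advances by one and no token is emitted.
\item \emph{Identifier.} The scan takes the longest run of identifier characters. This run is unique because it is a maximal prefix in a fixed class. The kind is then assigned by exact lookup of the whole identifier in the reserved word map, which is a function, so a temporal token arises only on an exact match and $\kappa$ is determined.
\item \emph{Number.} The scan takes the longest prefix of the fixed finite decimal form. It is again unique as a maximal prefix.
\item \emph{Operator.} The scan walks the operator trie from the root along $s_c,s_{c+1},\dots$ and keeps the deepest node visited that is marked as a complete operator.
\end{itemize}
In every branch the end position $e$ is a deterministic function of $(s,c)$. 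The induction hypothesis then applies at $e$.

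The main obstacle is the operator branch, in two respects. First, maximal munch must be well defined: the trie path from a given position is unique because the trie is a deterministic automaton with one child per character. The set of accepting depths along that path is therefore a finite set with a unique maximum whenever it is nonempty, and when it is empty the string is rejected. Second, we must be precise that the statement claims uniqueness of the stream the implementation returns, not the absence of ambiguity over all segmentations. A string such as one where a shorter operator followed by another operator could also spell a longer operator is resolved by the munch rule, not by the grammar. So I would phrase the lemma as determinism of the scanning function and note that uniqueness is relative to the declared operator alphabet and reserved word map, as the statement says. I would first try to make the trie argument airtight by fixing the alphabet as a prefix-closed trie with accepting marks. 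Any failure of determinism would then have to come from a nondeterministic class test, which is excluded by the disjointness of the character classes.
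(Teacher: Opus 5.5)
Your proposal is correct and follows essentially the same argument as the paper. Both use the four-way case split on the first character class: whitespace, longest identifier body before reserved-word lookup, longest decimal literal, and longest trie operator. In both, disjointness of the classes makes every emitted token and next index uniquely determined; you additionally spell out the induction on the cursor position and the progress and termination argument via nonempty token spans, which the paper leaves implicit.
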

\begin{proof}
At a whitespace position no token is emitted.
At an identifier position the longest identifier body is read before reserved word classification is applied.
At a numeric position the longest decimal literal is read.
At an operator position the trie returns the longest declared operator beginning at the current index.
The cases are disjoint by their first character class.
Thus every emitted token and every next index is uniquely determined.
\end{proof}
\section{Scoring semantics}
Formula evaluation alone is not
enough. A boundary benchmark also
needs to decide where a formula is
sampled. The paper uses obligation
restricted scoring. A formula is
evaluated over every frame, but its
score is averaged only over frames
where a second parsed expression marks
an obligation.

This design avoids the vacuity problem
of implication. If onset guard were
averaged over every frame, the
implication would be true almost
everywhere because reference onset is
rare. Averaging over reference onset
frames makes the score measure the
intended event.

Obligation restricted scoring also
separates policy from implementation.
The formula states what must hold. The
obligation states where the formula is
tested. Both are parsed artifacts.

For example, onset guard uses an
implication from reference onset to
nearby predicted onset, while its
obligation is reference onset. Silence
guard uses an implication from
predicted activity to nearby reference
activity, while its obligation is
predicted activity. These two guards
use the same implication connective
but measure different duties.

The interval clauses use a parallel
idea. Duration is tested only on
matched intervals. Fragmentation is
tested only on reference intervals.
Each clause has an explicit obligation
set, so empty cases can be handled
without inventing arbitrary penalties.

The neutral empty value is one when no
obligation exists. This convention is
used only when both the obligation set
is empty and there is no counter
evidence. It prevents a clip with no
onsets from becoming a false onset
failure.

The monitor returns a vector before it
returns a mean. The vector is the
scientific object. The mean is a
convenience for plots and rough
comparison. This order is important
because applications rarely share the
same failure costs.

The score values are empirical
satisfaction ratios over finite
obligations. They are read as
monitored contract values, with no
probabilistic sampling assumption
added by the logic.

The soft boundary score adds a second
reading for ambiguous edges. Reference
and predicted edge sets are compared
by an exponential kernel over nearest
boundary distance. The scale is
recorded in the manifest. A near miss
receives partial credit while an
isolated edge receives almost none.

Soft scores do not replace Boolean
guards. They put annotation
uncertainty and temporal fuzziness
into a visible numeric channel. A
detector can satisfy a strict guard
poorly while still keeping edge mass
close to the reference support.

The scoring semantics has one job. It
turns written boundary obligations
into arrays, intervals, and empirical
satisfaction values whose provenance
can be checked.

\begin{definition}[Obligation restricted score]
Let $\varphi$ and $\omega$ be parsed frame formulas.
For a trace environment $E$, the score of $\varphi$ under obligation $\omega$ is the mean of $\sem{\varphi}_E$ over frames where $\sem{\omega}_E$ is true.
If the obligation set is empty, the score is one.
\end{definition}

\begin{lemma}[Vacuity control]
If an implication formula $\alpha \Rightarrow \beta$ is scored under obligation $\alpha$, then frames where $\alpha$ is false do not increase the score.
\end{lemma}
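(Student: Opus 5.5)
The plan is to unfold the definition of obligation restricted score and observe that the averaging set is exactly the set of frames where $\sem{\alpha}_E$ is true. Write $O=\{i : \sem{\alpha}_E(i)=1\}$. By the definition, the score of $\alpha\Rightarrow\beta$ under obligation $\alpha$ equals
\[
S=\frac{1}{|O|}\sum_{i\in O}\sem{\alpha\Rightarrow\beta}_E(i)
\]
when $O\neq\emptyset$, and equals one otherwise. A frame $i$ with $\sem{\alpha}_E(i)=0$ lies outside $O$. It therefore contributes to neither the numerator nor the denominator, so changing the value of the implication there cannot move $S$. This is the precise sense of ``do not increase the score'': the score is invariant under any modification of the trace environment restricted to the frames where $\alpha$ is false, provided $\sem{\alpha}_E$ itself is unchanged.

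Next I will use the pointwise semantics of implication from the frame semantics definition. On $O$ we have $\sem{\alpha\Rightarrow\beta}_E(i)=\sem{\beta}_E(i)$, so $S$ is the mean of $\sem{\beta}_E$ over $O$. This step makes the conclusion concrete. The only frames that enter the score are frames where the antecedent holds, and at those frames the implication is not vacuously true. I will contrast this with the unrestricted mean over all $n$ frames. There, each of the $n-|O|$ frames with $\alpha$ false contributes a $1$, which pulls the mean toward one; that is exactly the vacuity the lemma rules out.

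The only delicate point is the empty obligation convention. When $O=\emptyset$ the score is one. I have to argue that this value is not produced by frames where $\alpha$ is false: it is fixed by the definition independently of $\sem{\beta}_E$ and of the implication's values. Hence it is still not an increase attributable to those frames. I expect this case, together with stating the invariance formulation clearly, to be the main thing to get right. The nonempty case is immediate from the definition.
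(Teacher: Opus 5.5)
Your main argument is the paper's own proof: the score averages only over $O=\{i:\sem{\alpha}_E(i)=1\}$, on $O$ the implication has the truth value of $\beta$, and frames where $\alpha$ is false are never sampled. Your handling of the empty obligation set is a harmless addition that the paper leaves implicit.

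The one thing to fix is the ``precise sense'' you attach to the lemma. You claim the score is invariant under arbitrary modifications of the trace environment at frames where $\alpha$ is false, with $\sem{\alpha}_E$ held fixed. That is false once $\beta$ contains a bounded modality.

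Take $\alpha=a$ and $\beta=N_{\epsilon}b$ with $r_\epsilon=1$, where $a$ is true only at frame $i$ and $b$ is true only at frame $i+1$. The score is $1$. Setting $b$ false at frame $i+1$, where $\alpha$ is false, leaves $\sem{\alpha}_E$ unchanged but drops the score to $0$. The reverse change raises it. This is exactly the onset guard pattern, where the predicted onset that witnesses a reference onset usually sits at a neighboring, non-obligated frame.

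What your unfolding actually shows is narrower: the score is a function of $\sem{\alpha}_E$ and of the restriction of $\sem{\beta}_E$ to $O$. So the correct invariance covers changes to the values of $\sem{\alpha\Rightarrow\beta}_E$ off $O$, which are the vacuous truths. It also covers environment changes that additionally leave $\sem{\beta}_E$ fixed on $O$.

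Your earlier sentence, that changing the value of the implication at those frames cannot move the score, already states the right version. Keep that one, and drop or qualify the stronger invariance claim.
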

\begin{proof}
The obligation restricted score averages only over frames where $\alpha$ is true.
On those frames the implication has the same truth value as $\beta$.
Frames where $\alpha$ is false are not sampled.
\end{proof}
\section{Contract basis}
The reported contract basis has seven
coordinates. Five are parsed frame
clauses and two are event clauses. The
frame clauses cover edge recall,
support recall, support precision, and
protected silence. The event clauses
cover duration and decomposition.

This basis is generated by three axes.
The first axis chooses the obligated
object, which can be a reference edge,
a reference active frame, a predicted
active frame, or a reference event.
The second axis chooses the witness,
which can be a nearby predicted edge,
nearby predicted support, nearby
reference support, or a matched
interval. The third axis chooses the
tolerated distortion.

Onset guard and offset guard are edge
recall clauses. Missing guard is
support recall. Spurious guard is
support precision. Silence guard is
protected precision under a smaller
radius. Duration guard is matched
support shape. Fragmentation guard is
event decomposition.

The basis is finite and declared in
the manifest. It is not hidden inside
the plotting code. A paper that needs
latency, hysteresis, overlap purity,
or continuity can add those
coordinates by the same obligation and
predicate construction.

The mean logic score is an unweighted
display value. The guard vector is the
diagnostic object. Applications can
choose weights after inspecting the
vector and can reproduce the scalar
from the same coordinates.

All decision points are inspectable.
Tolerance, guard radius, merge gap,
formulas, and event clauses appear in
the code. The monitor has no learned
judge and no hidden natural language
rubric.

\section{Contract selection on calibration traces}
A boundary contract is selected
relative to a declared calibration set
and a declared risk order. The
selection is scoped to the failure
cases that the benchmark states.

The calibration set contains trace
pairs for early onset, late onset,
early release, late release, missing
activity, spurious activity, silence
contamination, duration distortion,
and fragmentation. The same
construction can include real corpus
traces when a corpus supplies the
relevant risk order.

The risk order is not inferred from
the detector and not inferred from the
sound class names. It is a written
loss order over trace pathologies. In
the reported calibration, missed
target support, false support, and
protected silence bleed are ranked
highest. They decide whether a
downstream gate sees the event at all
or sees activity in protected silence.
Endpoint displacement, length
distortion, and fragmentation follow
because they preserve some event
evidence while corrupting the boundary
contract.

This ordering is not universal. It is
appropriate for a boundary gate and
for the paper stress test. A
transcription aligner might raise
onset symmetry. A surveillance
interface might raise false support.
The calculus makes that replacement a
calibration object rather than an
editorial rewrite.

A candidate clause basis is generated
from the grammar, event predicates,
matcher family, and tolerance grid.
Each candidate is evaluated on the
calibration traces. Duplicate
coordinates are collapsed by their
observed signatures.

A contract separates the declared risk
order when every higher risk trace
pair is worse than a lower risk pair
on at least one retained coordinate.
The coordinate set is therefore
accountable to a written calibration
object.

The reported seven guards are the
smallest separating set found for the
paper calibration family after
duplicate collapse. Other tasks may
select another set because their risk
order is different.

Tolerance policy is handled as an
exposed parameter. The benchmark
reports the base tolerance and the
sweep over neighboring values. The
stable decoding objective optimizes
strict transition neighborhoods during
validation so that a detector is not
selected only for loose collars.

The selected contract is canonical
only relative to this calibration
object. The policy is finite,
inspectable, and replaceable. The
calibration table records the
pathologies that justify the reported
coordinate set.

\begin{definition}[Risk ordered calibration set]
A risk ordered calibration set is a finite set $\mathcal W$ of trace pairs together with a declared risk map $\rho\colon \mathcal W\to \mathbb R$.
\end{definition}

\begin{definition}[Separating contract]
Let $\mathcal B$ be a finite candidate clause basis.
A contract $C\subseteq\mathcal B$ separates $\rho$ on $\mathcal W$ when for every $u,v\in\mathcal W$ with $\rho(u)<\rho(v)$ there is a clause $c\in C$ such that $c(u)>c(v)$.
\end{definition}

\begin{definition}[Selected contract]
Among separating contracts, the selected contract is the lexicographically least contract under coordinate count, monitor cost, and source order.
\end{definition}
\begin{table}[H]
\centering
\small
\resizebox{0.88\linewidth}{!}{\begin{tabular}{lllr}\toprule
Calibration case & Risk source & Selected coordinate & Rank \\\midrule
late onset & trigger latency & onset\_guard & 4 \\
early onset & premature trigger & spurious\_guard & 3 \\
late release & tail leakage & offset\_guard & 4 \\
early release & truncated support & missing\_guard & 4 \\
missing activity & lost support & missing\_guard & 5 \\
extra activity & false support & spurious\_guard & 5 \\
silence bleed & activity in protected silence & silence\_guard & 5 \\
length distortion & wrong event support & duration\_guard & 4 \\
fragmentation & split event support & fragmentation\_guard & 4 \\
\bottomrule\end{tabular}
}
\caption{Risk ordered calibration cases used to justify the reported contract coordinates.}
\label{tab_contract_calibration}
\end{table}
The profile sensitivity table uses the
same monitored guard values and
changes only the declared risk
profile. It is a small check on the
subjectivity of risk order. A profile
that cares only about support recall
selects the clean dilated detector,
while balanced, edge timing, silence
protection, and event integrity
profiles select the contract trained
detector.

\IfFileExists{results/table_contract_profile_sensitivity.tex}{
\begin{table}[H]
\centering
\small
\resizebox{0.92\linewidth}{!}{\begin{tabular}{lllrrr}\toprule
Risk profile & Lead coordinate & Selected detector & Profile score & Boundary F1 & Logic \\\midrule
balanced & onset\_guard & contract\_tcn\_aug & 0.802 & 0.829 & 0.802 \\
support\_recall & missing\_guard & dilated\_cnn & 0.949 & 0.408 & 0.522 \\
edge\_timing & onset\_guard & contract\_tcn\_aug & 0.674 & 0.829 & 0.802 \\
silence\_protection & silence\_guard & contract\_tcn\_aug & 0.923 & 0.829 & 0.802 \\
event\_integrity & fragmentation\_guard & contract\_tcn\_aug & 0.776 & 0.829 & 0.802 \\
\bottomrule\end{tabular}
}
\caption{Alternative risk profiles evaluated on the same controlled benchmark outputs.}
\label{tab_contract_profile_sensitivity}
\end{table}
}{}
\section{Failure taxonomy}
The guard family induces a failure
taxonomy. The taxonomy is not a list
of sound classes. It is a list of
trace pathologies. This distinction
matters because the same pathology can
appear in speech, domestic audio,
machine sounds, or neural decoding
outputs.

Onset displacement means that the
detector recognizes an event after or
before the reference start. Offset
displacement means that the detector
releases too early or too late. These
failures can have different causes and
should not be collapsed.

Missing activity means that a
reference active region lacks nearby
prediction. Spurious activity means
that a predicted region lacks nearby
reference evidence. The two are dual
at the formula level but asymmetric in
applications.

Silence contamination is stricter than
spurious activity because it evaluates
predicted active frames against a
smaller guard radius. It exposes
detectors that bleed into silence even
when they still overlap the event.

Duration error is an interval shape
failure. It catches a matched event
whose temporal support is too short or
too long. Fragmentation error is a
decomposition failure. It catches a
matched reference event represented as
several predicted pieces.

A conventional metric may blend these
pathologies into one score. The trace
logic keeps them separate until the
reporting layer. The resulting table
identifies the failure class before
any model repair is proposed.

The taxonomy is compatible with richer
formulas. A low latency guard is
another trace pathology. A class
switch guard is another typed trace
pathology. The framework keeps these
additions inside the same monitor
interface.

The taxonomy makes the benchmark
informative even when aggregate scores
are close. Its output is a structured
explanation of detector failure.

\section{Speech seeded benchmark}
The benchmark creates nine hundred
sixty clean training items and four
hundred twenty test items. Each item
is twelve seconds long at eight
kilohertz. The frame step is twenty
milliseconds. The test split is
balanced across clean, noise,
reverberation, clipping, gain drift,
temporal jitter, and compound
degradation conditions.

Speech targets are sampled from Mini
LibriSpeech and normalized inside
target intervals. Tone and burst
targets are generated with known
support. Some events overlap across
classes, so the same time span can
carry more than one active target.
Envelopes are smoothed by raised
cosines and silent gaps contain low
level noise. The generator stores both
union intervals and class indexed
intervals before perturbation.

Unlabeled distractor events are added
outside target intervals. They are
audible but absent from the target
masks. Their role is to separate sound
activity detection from labeled
boundary detection.

Additive noise changes the background
floor. Reverberation convolves the
waveform with a decaying impulse
response. Clipping applies nonlinear
amplitude saturation. Gain drift
applies a slow envelope across the
clip. Jitter shifts the signal
slightly in time and adds small low
frequency wander.

The test perturbations are stronger
than the training augmentations for
the contract aware detector. The
compound condition applies additive
noise, reverberation, and gain drift
in sequence so that calibration,
temporal smearing, and background
contamination interact.

All waveforms are assembled locally
from deterministic scene state and a
public speech corpus. The result
tables are tied to visible interval
generation, fixed perturbation seeds,
and the Mini LibriSpeech source
corpus.

The scene design keeps the annotation
channel exact while injecting real
speech acoustics into the target
class. This gives the logic a harder
waveform substrate without giving up
exact interval supervision.

\begin{table}[H]
\centering
\small
\begin{tabular}{ll}
\toprule
Step & Scene generation action \\\midrule
1 & sample target intervals and class labels from the seed state \\
2 & place Mini LibriSpeech speech segments inside speech intervals \\
3 & synthesize tone and burst targets with the same interval interface \\
4 & add class overlap and unlabeled distractors outside target masks \\
5 & apply deterministic perturbations after reference masks are fixed \\
6 & write union masks, class masks, waveforms, and manifest rows \\
\bottomrule
\end{tabular}
\caption{Controlled scene generation procedure used by the benchmark.}
\label{tab_scene_generation}
\end{table}
The protocol table is generated from
the same generator constants and
manifest that create the controlled
split. It records the seed, interval
distribution, class draw, overlap
policy, distractor policy, and
severity scalars without duplicating
them by hand.

\IfFileExists{results/table_dataset_protocol.tex}{
\begin{table}[H]
\centering
\small
\resizebox{0.98\linewidth}{!}{\begin{tabular}{ll}\toprule
Item & Value \\\midrule
seed & 20260514 \\
split & 960 train and 420 test clips \\
clip grid & 12.0 s at 0.02 s frame step \\
classes & speech, tone, burst \\
test conditions & clean, noise, reverb, clipping, gain\_drift, jitter, compound \\
target event count & 3 to 6 attempts before end clipping \\
inter event gap & 0.12 to 0.75 s \\
target duration & 0.32 to 1.25 s \\
target class draw & speech 0.62, tone 0.20, burst 0.18 \\
target gain & 0.28 to 0.90  \\
overlap probability & 0.58 per target \\
overlap placement & start offset -0.08 s to 0.55 of target duration \\
overlap acceptance & at least 0.14 s duration and 0.08 s shared support \\
distractor count & 1 to 3 unlabeled events \\
distractor duration & 0.10 to 0.58 s \\
distractor placement & outside target masks with 0.08 s guard \\
distractor class draw & speech 0.25, tone 0.40, burst 0.35 \\
train severity & 0.65 \\
test severity & 1.25 \\
\bottomrule\end{tabular}
}
\caption{Controlled dataset protocol generated from the scene generator and manifest.}
\label{tab_dataset_protocol}
\end{table}
}{}
The perturbation table reports train
and test severity with the parameter
ranges induced by those severities.
Test perturbations are deliberately
harsher than augmented training views.

\IfFileExists{results/table_perturbation_protocol.tex}{
\begin{table}[H]
\centering
\small
\resizebox{0.98\linewidth}{!}{\begin{tabular}{lll}\toprule
Condition & Train augmentation & Test perturbation \\\midrule
noise & SNR 0.5 to 10.4 dB & SNR -5.5 to 6.2 dB \\
reverb & tail 0.076 to 0.302 s & tail 0.100 to 0.470 s \\
clipping & limit 0.26 to 0.52, drive 1.36 to 1.96 & limit 0.19 to 0.42, drive 1.51 to 2.29 \\
gain drift & modulation 0.04 to 0.18 Hz, depth 0.37 & modulation 0.04 to 0.23 Hz, depth 0.53 \\
jitter & shift up to 41.2 ms plus wander & shift up to 56.2 ms plus wander \\
compound & noise, reverb, gain drift & noise, reverb, gain drift \\
\bottomrule\end{tabular}
}
\caption{Perturbation protocol induced by train and test severity settings.}
\label{tab_perturbation_protocol}
\end{table}
}{}
\section{Audit protocol}
The experiment is organized as an
audit. A single policy ranking answers
only one question. The audit asks how
stable the conclusion is when the
policy is inspected from several
angles.

The first angle is aggregate
comparison. It reports frame F1,
transition region F1, boundary F1,
onset error, and mean logic. This view
is compact and shows whether the
parsed score agrees with familiar
metrics.

The second angle is perturbation
comparison. It reports boundary F1 by
acoustic condition. This view asks
whether a detector fails because the
waveform was noisy, smeared, clipped,
slowly rescaled, or shifted.

The third angle is guard comparison.
It reports the seven logical and
interval obligations. This view asks
which part of the trace contract
failed.

The fourth angle is matcher
sensitivity. A finite interval stress
track and an acoustic prediction audit
compare greedy matching with exact
maximum cardinality matching. This
view asks whether interval clauses
depend on a local pairing choice.

The fifth angle is tolerance
sensitivity. It reparses the formulas
under several tolerance values and
reruns the monitor on the same
predictions. This view asks whether
the conclusion depends on a permissive
boundary radius.

The sixth angle is clean data and seed
sensitivity. It retrains controlled
variants under smaller clean subsets
and under three independent scene and
model seed pairs.

The audit protocol is intentionally
broad. A trace specification language
earns its role by supporting
measurements that a single scalar
cannot support.

The primary controlled table uses one
deterministic generator, one six
detector local set, one frozen encoder
probe, and one local machine. The seed
robustness table then repeats the
controlled comparison on smaller
independent scene and model seed
pairs.

Bootstrap intervals over the main test
set measure finite test set
uncertainty. The seed robustness table
measures training and scene variation
for the main intervention contrasts.

\section{Detectors and feature pipeline}
The feature extractor uses log energy,
zero crossing rate, spectral centroid,
spectral bandwidth, spectral flux, low
band ratio, and high band ratio. Clean
trained detectors use a scaler fitted
on clean training items. The contract
aware detector uses a separate scaler
fitted on the augmented training
views.

Class balancing is explicit. The
logistic baseline uses balanced class
weights. The neural baselines use
positive frame weights estimated from
the training masks and clipped to
bounded ranges. The class indexed
detector applies the same idea per
output class before macro evaluation.

Thresholds are selected on training
evidence rather than on the test
split. The energy and flux detectors
search an eighty point threshold grid
over training quantiles. The contract
aware decoder searches threshold and
hysteresis pairs on a held out
training partition, then freezes that
decoder before test evaluation.

The adaptive energy detector selects a
threshold on clean training features
by maximizing frame F1 over a grid. It
then smooths very short active runs
and very short inactive holes. This
detector is a calibration baseline.

The spectral flux detector combines
energy and flux before thresholding.
It is more sensitive to boundary
motion than energy alone, exposing
spurious onset behavior.

The logistic detector trains a
balanced logistic regression model on
the normalized features. It is the
linear baseline. Its failures indicate
what cannot be handled by a single
frame affine boundary in feature
space.

The compact temporal convolutional
detector applies two one dimensional
convolution layers and a pointwise
output layer to frame features. It is
trained with binary cross entropy and
Adam. The model supplies a nonlinear
temporal baseline.

The residual dilated temporal
convolutional detector adds four
residual blocks with dilations one,
two, four, and eight. Group
normalization and gradient clipping
make the training stable across the
larger speech seeded split. Its
receptive field is wider than the
compact detector and therefore tests
whether temporal context repairs
boundary contracts.

The contract aware augmented detector
keeps the residual temporal backbone
but adds depthwise dilated blocks,
class indexed frame heads, class
indexed boundary heads, and calibrated
threshold or hysteresis decoding. It
is trained on clean clips plus
deterministic noise, reverberation,
clipping, gain drift, jitter, and
compound augmentations at lower
severity than the test split. The loss
combines frame activity, boundary
targets around onset and offset
changes, and a small smoothness term
away from edges.

The first five detectors are trained
on clean union traces only. The
contract aware detector in the main
table is now also a union detector
trained with the same output target
but with deterministic augmentation
and edge supervision. A separate class
indexed detector is used only for
typed analysis. A matched regime table
below uses the same backbone and union
output for the fairness question.

The objective audit separates training
losses from the reported monitor. The
contract score, event matcher, and
guard vector are not used as a
training objective for the union
detector. They are test time
evaluators.

\IfFileExists{results/table_training_objective_audit.tex}{
\begin{table}[H]
\centering
\small
\resizebox{0.96\linewidth}{!}{\begin{tabular}{llll}\toprule
Component & Contract score & Event matcher & Source \\\midrule
frame activity loss & no & no & binary cross entropy on frame masks \\
edge auxiliary loss & no & no & binary cross entropy on local edge targets \\
smoothness regularizer & no & no & probability variation away from edges \\
decode calibration & no & no & frame and transition F1 on held out train split \\
reported contract monitor & yes & yes & test time evaluator only \\
\bottomrule\end{tabular}
}
\caption{Training objective audit for the boundary aware detector.}
\label{tab_training_objective_audit}
\end{table}
}{}
\section{Frozen encoder probe}
The encoder probe evaluates modern
pretrained audio encoders through a
common boundary head. Each encoder is
frozen. Hidden sequences are projected
to a fixed width with a deterministic
Gaussian map, aligned to the frame
grid, standardized on the probe train
split, and passed to the same boundary
aware contract head.

The probe contains five families.
wav2vec2 base supplies a self
supervised speech representation.
wav2vec2 Conformer supplies a large
convolution augmented transformer
speech representation. AST supplies an
AudioSet tuned spectrogram
transformer. HTS AT is evaluated
through the CLAP HTS AT audio branch.
BEATs is evaluated through the
AudioSet two million fine tuned BEATs
checkpoint and the official Microsoft
front end.

This audit asks whether the logical
guards continue to separate failure
modes when the acoustic representation
is modern and pretrained. A strong
encoder can improve frame evidence
while still creating boundary tails,
onset delays, or silence leakage.

The probe run uses a deterministic
audit split recorded in the manifest.
The model registry stores the public
model identifiers, sample rates, cache
locations, and projection width. Model
weights and speech waveforms are
downloaded by scripts and are not
bundled with manuscript sources.

The comparison freezes the backbones.
Fine tuning every backbone would mix
representation quality, optimizer
schedule, and contract behavior.
Frozen encoders isolate the question
of whether high level representations
already carry enough timing structure
for the finite trace contract.

The projection and head design is
deliberately common. It makes the rows
a timing probe over frozen
representations, not a fairness proof
for complete SED systems. A model
specific fine tuning recipe could
change the absolute order.

The table is a representation timing
probe. The controlled quantity is the
monitor and head. Full challenge
systems would add fine tuning recipes,
model specific schedules, and task
specific decoding. The rows are
therefore not a ranking of backbone
quality.

\IfFileExists{results/table_sota_model_registry.tex}{
\begin{table}[H]
\centering
\scriptsize
\resizebox{0.96\linewidth}{!}{\begin{tabular}{p{0.20\linewidth}p{0.13\linewidth}rp{0.42\linewidth}}\toprule
Probe row & Family & Hz & Hugging Face record \\\midrule
wav2vec2\_base & wav2vec2 & 16000 & \href{https://huggingface.co/facebook/wav2vec2-base-960h}{\texttt{facebook/wav2vec2-base-960h}} \\
wav2vec2\_conformer & conformer & 16000 & \href{https://huggingface.co/facebook/wav2vec2-conformer-rel-pos-large-960h-ft}{\texttt{facebook/wav2vec2-conformer-}}\newline \href{https://huggingface.co/facebook/wav2vec2-conformer-rel-pos-large-960h-ft}{\texttt{rel-pos-large-960h-ft}} \\
ast\_audioset & ast & 16000 & \href{https://huggingface.co/MIT/ast-finetuned-audioset-10-10-0.4593}{\texttt{MIT/ast-finetuned-audioset-}}\newline \href{https://huggingface.co/MIT/ast-finetuned-audioset-10-10-0.4593}{\texttt{10-10-0.4593}} \\
htsat\_clap & hts\_at & 48000 & \href{https://huggingface.co/laion/clap-htsat-fused}{\texttt{laion/clap-htsat-fused}} \\
beats\_as2m & beats & 16000 & \href{https://huggingface.co/WeiChihChen/BEATs_iter3_plus_AS2M_finetuned_on_AS2M_cpt2}{\texttt{WeiChihChen/BEATs\_iter3\_plus\_AS2M\_}}\newline \href{https://huggingface.co/WeiChihChen/BEATs_iter3_plus_AS2M_finetuned_on_AS2M_cpt2}{\texttt{finetuned\_on\_AS2M\_cpt2}} \\
\bottomrule\end{tabular}
}
\caption{Public Hugging Face records used by the frozen encoder probe.}
\label{tab_sota_model_registry}
\end{table}
}{}
The registry table is part of the
artifact interface. It records public
model pages rather than only shorthand
names, so the probe rows can be traced
to their source repositories.

\begin{table}[H]
\centering
\small
\resizebox{0.92\linewidth}{!}{\begin{tabular}{lrrrrr}\toprule
Frozen encoder & Frame F1 & Edge F1 & Boundary F1 & Logic & Soft boundary \\\midrule
wav2vec2\_base & 0.749 & 0.745 & 0.617 & 0.613 & 0.422 \\
wav2vec2\_conformer & 0.700 & 0.696 & 0.563 & 0.513 & 0.320 \\
ast\_audioset & 0.633 & 0.590 & 0.110 & 0.312 & 0.032 \\
htsat\_clap & 0.627 & 0.587 & 0.200 & 0.330 & 0.052 \\
beats\_as2m & 0.713 & 0.651 & 0.563 & 0.539 & 0.359 \\
\bottomrule\end{tabular}
}
\caption{Frozen pretrained encoder probe evaluated with the same boundary head and trace monitor.}
\label{tab_sota_zoo}
\end{table}
The table fixes the representation
probe. The generated figure remains in
the artifact for visual inspection,
while the manuscript keeps the numeric
comparison to control float density.

The condition breakdown remains part
of the generated artifact. It is
useful for checking whether a frozen
representation fails globally or only
under a particular acoustic
perturbation.

\section{Challenge level SED baseline}
The frozen encoder probe is no longer
the only modern model comparison. The
artifact also evaluates the official
DCASE 2024 Task 4 baseline checkpoint
and its released MAESTRO postprocessed
outputs.

This baseline is external to the
contract design. It comes from the
DCASE Task 4 organizers, uses the
DESED task baseline recipe, includes a
trained checkpoint, and ships
postprocessed MAESTRO prediction
files. The paper does not retrain it
and does not tune it to the contract
monitor.

The local evaluation reads those
released MAESTRO score files, projects
them to the same twenty millisecond
grid, thresholds the class scores, and
applies the same union and class
indexed contract reports. The
threshold sweep is shown because the
released files are score fields rather
than a single binary submission.

This track changes the role of the
real corpus class experiment. The
local class detector is a controlled
ablation probe. The DCASE output is
the trained class indexed challenge
level reference.

The contract vector is therefore
tested on three different acoustic
regimes. Local transparent models
expose failure modes under controlled
perturbations. Frozen encoders test
representation timing under a common
head. The DCASE baseline tests class
indexed SED output from an external
challenge recipe.

\IfFileExists{results/table_dcase_challenge_record.tex}{
\begin{table}[H]
\centering
\small
\resizebox{0.82\linewidth}{!}{\begin{tabular}{ll}\toprule
Item & Value \\\midrule
artifact & DCASE 2024 Task 4 baseline \\
recipe & DESED\_task 2024\_task4\_audioset\_baseline \\
DOI & 10.5281/zenodo.11034682 \\
archive md5 & 311071eb85563cfa0a8846c1ec1aaa41 \\
official files & 1168 \\
official seconds & 10459.12 \\
official micro event F1 & 0.572 \\
official macro event F1 & 0.568 \\
\bottomrule\end{tabular}
}
\caption{Official DCASE 2024 Task 4 baseline artifact record.}
\label{tab_dcase_challenge_record}
\end{table}
}{}
The DCASE record table anchors the
comparison to the external challenge
artifact. The next table applies the
paper monitor to the MAESTRO score
files carried by that artifact.

\IfFileExists{results/table_dcase_maestro_challenge.tex}{
\begin{table}[H]
\centering
\small
\resizebox{0.92\linewidth}{!}{\begin{tabular}{lrrrrrrr}\toprule
Threshold & Files & Clips & Union event F1 & Union logic & Class event F1 & Class boundary F1 & Class logic \\\midrule
0.30 & 16 & 357 & 0.807 & 0.897 & 0.462 & 0.646 & 0.694 \\
0.50 & 16 & 357 & 0.681 & 0.825 & 0.463 & 0.634 & 0.690 \\
0.70 & 16 & 357 & 0.325 & 0.641 & 0.238 & 0.373 & 0.557 \\
\bottomrule\end{tabular}
}
\caption{Contract and standard scores for official DCASE baseline MAESTRO outputs.}
\label{tab_dcase_maestro_challenge}
\end{table}
}{}
The DCASE MAESTRO row is the class
indexed real SED reference used in the
paper. It is not a local detector
designed around the proposed monitor.

\IfFileExists{results/table_real_class_reference_tracks.tex}{
\begin{table}[H]
\centering
\small
\resizebox{0.92\linewidth}{!}{\begin{tabular}{llrrrr}\toprule
Track & Selection & Clips & Class event F1 & Class boundary F1 & Class logic \\\midrule
local class detector & single trained local detector & 396 & 0.193 & 0.304 & 0.542 \\
official DCASE baseline & threshold maximizing class event F1 & 357 & 0.463 & 0.634 & 0.690 \\
\bottomrule\end{tabular}
}
\caption{Local and challenge level class indexed tracks on MAESTRO Real evidence.}
\label{tab_real_class_reference_tracks}
\end{table}
}{}
\section{MAESTRO Real protocol}
The real corpus protocol uses MAESTRO
Real development data. The recordings
are real TUT soundscapes from cafe
restaurant, city center, grocery
store, metro station, and residential
area scenes. The labels are estimated
strong soft labels built from
crowdsourced decisions.

The annotation audit measures the
amount of uncertain boundary evidence
before detector training. For each
scene it reports file count, recorded
duration, mean Bernoulli entropy of
soft labels, and the fraction of label
cells between one quarter and three
quarters.

The detector protocol projects the
soft strong labels to the same twenty
millisecond grid used by the monitor.
A hard reference uses a threshold of
one half. The soft values remain
available for edge uncertainty and
corpus audit.

The split is source file based. Clips
extracted from one WAV file never
appear in both train and test
partitions. This preserves the room,
background, and annotation style as
held out evidence.

The same detector set is trained on
real audio features. The class aware
contract detector is evaluated both as
a union trace and as a class indexed
trace over the MAESTRO event
vocabulary.

This protocol gives the paper a second
empirical layer. The controlled split
tests exact interval stress cases. The
MAESTRO Real split tests production
annotation ambiguity and real acoustic
scenes with the same monitor. The
local typed detector exposes the union
and class gap, while the DCASE
baseline supplies the external class
indexed SED reference.

The annotation audit, source file
split table, and class indexed summary
are stored in the artifact. The
manuscript keeps the gap table and a
compact mechanism table because they
carry the main real corpus argument.

\IfFileExists{results/table_real_union_class_gap.tex}{
\begin{table}[H]
\centering
\small
\resizebox{0.62\linewidth}{!}{\begin{tabular}{lrrr}\toprule
Metric & Union & Class indexed & Gap \\\midrule
Frame F1 & 0.960 & 0.278 & 0.681 \\
Boundary F1 & 0.961 & 0.304 & 0.658 \\
Logic & 0.917 & 0.542 & 0.374 \\
\bottomrule\end{tabular}
}
\caption{Union and class indexed gap on MAESTRO Real for the contract detector.}
\label{tab_real_union_class_gap}
\end{table}
}{}
The union and class indexed gap is
large enough to change the reading of
the real corpus run. The monitor
exposes a weak typed detector even
when union activity looks stable.

The mechanism table decomposes the gap
into union masking, class sparsity,
support drift, class substitution,
annotation softness, and overlap
confusion. The point is not that one
factor explains all of MAESTRO Real.
The point is that union activity can
hide several typed failures at once.

\IfFileExists{results/table_maestro_gap_factors.tex}{
\begin{table}[H]
\centering
\small
\resizebox{0.98\linewidth}{!}{\begin{tabular}{lll}\toprule
Mechanism & Evidence & Reading \\\midrule
union masking & boundary gap 0.658 and logic gap 0.374 & activity timing is easier than typed event identity \\
class sparsity & 6 of 14 referenced classes have no predicted support & rare labels dominate typed boundary loss \\
support imbalance & lower support boundary F1 0.000 & low mass classes receive weak interval evidence \\
support overreach & median prediction reference ratio 2.112 & typed heads cover activity too broadly \\
class substitution & people talking to car share 0.239 & some failure occurs inside active regions \\
annotation softness & mean entropy 0.610 and ambiguous cells 0.351 & soft labels make hard class boundaries uncertain \\
overlap confusion & mean confused overlap 0.037 & overlap contributes less than sparsity and support drift \\
\bottomrule\end{tabular}
}
\caption{Mechanisms behind the MAESTRO Real union and class indexed gap.}
\label{tab_maestro_gap_factors}
\end{table}
}{}
The detailed class breakdown, support
strata, and pair confusion tables
remain in the artifact. The low local
class indexed scores are diagnostic
evidence about typed difficulty on
real soundscapes. The DCASE track
prevents that diagnostic from being
the only real SED class evidence.

The MAESTRO standard metric tables put
the same decoded real traces under
segment F1 and event collar reporting.
They show that the union and typed gap
is not created only by the contract
vector.

\IfFileExists{results/table_maestro_standard_metrics.tex}{
\begin{table}[H]
\centering
\small
\resizebox{0.88\linewidth}{!}{\begin{tabular}{lrrrrr}\toprule
Detector & Frame F1 & Segment F1 & Event F1 & Boundary F1 & Logic \\\midrule
adaptive\_energy & 0.956 & 0.960 & 0.719 & 0.896 & 0.889 \\
spectral\_flux & 0.916 & 0.957 & 0.485 & 0.685 & 0.815 \\
logistic\_features & 0.569 & 0.848 & 0.105 & 0.276 & 0.626 \\
temporal\_cnn & 0.960 & 0.960 & 0.831 & 0.961 & 0.917 \\
dilated\_cnn & 0.957 & 0.959 & 0.823 & 0.953 & 0.914 \\
contract\_tcn\_real & 0.960 & 0.960 & 0.831 & 0.961 & 0.917 \\
\bottomrule\end{tabular}
}
\caption{Standard SED style scores and contract scores on MAESTRO Real union traces.}
\label{tab_maestro_standard_metrics}
\end{table}
}{}
\IfFileExists{results/table_maestro_class_standard_metrics.tex}{
\begin{table}[H]
\centering
\small
\resizebox{0.70\linewidth}{!}{\begin{tabular}{lrrrr}\toprule
Detector & Class segment F1 & Class event F1 & Class boundary F1 & Class logic \\\midrule
contract\_tcn\_real & 0.278 & 0.193 & 0.304 & 0.542 \\
\bottomrule\end{tabular}
}
\caption{Class indexed standard scores and contract scores on MAESTRO Real.}
\label{tab_maestro_class_standard_metrics}
\end{table}
}{}
The aggregate figure remains in the
artifact. The table gives the
numerical values used by that plot and
keeps onset error visible beside the
scores.

\begin{table}[H]
\centering
\small
\resizebox{0.90\linewidth}{!}{\begin{tabular}{lrrrrr}\toprule
Detector & Frame F1 & Edge F1 & Boundary F1 & Onset ms & Logic \\\midrule
adaptive\_energy & 0.504 & 0.567 & 0.310 & 82.8 & 0.514 \\
spectral\_flux & 0.614 & 0.720 & 0.358 & 142.2 & 0.560 \\
logistic\_features & 0.533 & 0.622 & 0.362 & 203.8 & 0.470 \\
temporal\_cnn & 0.607 & 0.724 & 0.374 & 71.4 & 0.487 \\
dilated\_cnn & 0.656 & 0.763 & 0.408 & 91.1 & 0.522 \\
contract\_tcn\_aug & 0.889 & 0.817 & 0.829 & 64.7 & 0.802 \\
\bottomrule\end{tabular}
}
\caption{Mean scores across all perturbation conditions.}
\label{tab_overall}
\end{table}
The next table adds a nonparametric
uncertainty report over test clips. It
does not replace independent seeds,
but it prevents the point estimates
from being read as exact constants.

\begin{table}[H]
\centering
\small
\resizebox{0.88\linewidth}{!}{\begin{tabular}{lll}\toprule
Detector & Boundary F1 95\% CI & Logic 95\% CI \\\midrule
adaptive\_energy & 0.310 [0.279, 0.341] & 0.514 [0.487, 0.541] \\
contract\_tcn\_aug & 0.829 [0.813, 0.845] & 0.802 [0.786, 0.816] \\
dilated\_cnn & 0.408 [0.366, 0.450] & 0.522 [0.489, 0.555] \\
logistic\_features & 0.362 [0.332, 0.391] & 0.470 [0.443, 0.497] \\
spectral\_flux & 0.358 [0.331, 0.384] & 0.560 [0.536, 0.583] \\
temporal\_cnn & 0.374 [0.335, 0.412] & 0.487 [0.457, 0.518] \\
\bottomrule\end{tabular}
}
\caption{Conditional bootstrap confidence intervals over controlled benchmark test clips.}
\label{tab_overall_uncertainty}
\end{table}
The acoustic prediction matcher audit
keeps the same detector outputs and
changes only the interval matching
policy. The learned detector rows stay
stable, while the observed bridge
probe activates the difference on
reference pairs drawn from the
controlled test split. This separates
a model fact from a contract fact.

\begin{table}[H]
\centering
\small
\resizebox{0.88\linewidth}{!}{\begin{tabular}{lrrrrr}\toprule
Detector & Changed & Greedy BF1 & Optimal BF1 & Mean $\Delta$ logic & Max $|\Delta|$ \\\midrule
adaptive\_energy & 0.000 & 0.310 & 0.310 & 0.000 & 0.000 \\
spectral\_flux & 0.000 & 0.358 & 0.358 & 0.000 & 0.000 \\
logistic\_features & 0.000 & 0.362 & 0.362 & 0.000 & 0.000 \\
temporal\_cnn & 0.000 & 0.374 & 0.374 & 0.000 & 0.000 \\
dilated\_cnn & 0.000 & 0.408 & 0.408 & 0.000 & 0.000 \\
contract\_tcn\_aug & 0.000 & 0.829 & 0.829 & 0.000 & 0.000 \\
observed\_bridge\_probe & 0.055 & 0.606 & 0.624 & 0.002 & 0.048 \\
\bottomrule\end{tabular}
}
\caption{Greedy and optimal matcher comparison on controlled acoustic predictions.}
\label{tab_matching_policy_benchmark}
\end{table}
\section{Primary results}
Table \ref{tab_overall} shows the
controlled benchmark result. It is a
framework measurement table rather
than a model ranking. The contract
aware row is a union detector trained
with augmentation and edge
supervision. The typed detector is
kept out of this table and appears
only in the class indexed analysis.
The matched regime table isolates
augmentation and decoding under a
shared backbone.

Edge F1 is lower than frame F1 for
every detector because it restricts
attention to neighborhoods around
reference transitions. This confirms
that much of the apparent frame
performance comes from stable interior
regions rather than boundary
placement.

The threshold baselines solve the
clean condition but fail under noise
and reverberation. Their frame
behavior and logical behavior diverge
because broad activity overlap does
not guarantee a correct contract.

The logistic detector performs well
under clean, clipping, and gain drift
conditions. It degrades under jitter,
noise, and reverberation. This
indicates that clean frame features
can fit easy boundary evidence but do
not fully transfer to temporal
smearing.

The clean trained convolutional
detectors transfer best to noise and
jitter among the baselines. They still
fail under reverberation and compound
degradation because these
perturbations create physical tails
and calibration shifts that compete
directly with offset truth. This is
visible in offset, silence, duration,
and fragmentation related scores.

The standard metric table uses the
same decoded traces and shows why the
monitor is not a restatement of one
event score. Segment and collar based
event scores compress different
boundary failures that the contract
vector later separates.

The contract monitor is fixed before
detector comparison. It is evaluated
on threshold systems, logistic
regression, clean convolutional
models, the augmented boundary model,
and frozen encoder probes. The
augmented detector is therefore an
intervention studied by the monitor,
not the source of the monitor
definition.

The ablation separates the training
interventions. Augmentation explains
most of the union trace repair. Edge
loss and decode calibration add
smaller shifts. The class full variant
drops on union scores because the
typed head solves a harder output
problem. The main benchmark therefore
reports the union trained detector for
union scores.

The attribution table sharpens that
reading. The augmentation contrast
supplies most of the gain. Edge loss
and decode calibration add a smaller
positive shift after augmentation. The
class indexed head lowers union scores
while producing the typed output
needed for class macro evaluation.

The confidence intervals are bootstrap
intervals over test clips conditional
on the trained models. They do not
estimate training seed variance. The
ablation significance column uses
paired bootstrap draws against the
clean dilated baseline and Holm
correction across the non baseline
variants. The seed table below
measures a separate source of
variation through independent scene
and model seeds.

Onset error also changes after
separating union and typed detectors.
The union trained contract detector
has mean onset error of 64.7
milliseconds, below the temporal CNN
at 71.4 milliseconds and adaptive
energy at 82.8 milliseconds. Its
boundary advantage therefore includes
sharper onset placement as well as
fewer unmatched events, less silence
leakage, and less fragmentation.

The result joins evaluation and
intervention. The contract vector
surfaces a richer failure profile than
a single score and records which part
of a targeted modeling change repairs
that profile.

\begin{table}[H]
\centering
\small
\resizebox{0.92\linewidth}{!}{\begin{tabular}{lrrrrrrr}\toprule
Variant & Aug & Edge & Cal & Class & Boundary F1 & Logic & Holm $q$ \\\midrule
clean\_dilated & N & N & N & N & 0.425 & 0.510 & -- \\
decode\_calibrated & N & N & Y & N & 0.411 & 0.492 & 0.041 \\
augmentation\_only & Y & N & N & N & 0.814 & 0.777 & 0.002 \\
boundary\_objective & N & Y & N & N & 0.384 & 0.475 & 0.002 \\
union\_full & Y & Y & Y & N & 0.834 & 0.808 & 0.002 \\
class\_full & Y & Y & Y & Y & 0.769 & 0.722 & 0.002 \\
\bottomrule\end{tabular}
}
\caption{Ablation of augmented training, edge objective, decode calibration, and class indexed heads.}
\label{tab_ablation}
\end{table}
The metric neutrality audit uses the
same monitored outputs but separates
all detector leaders from clean non
contract leaders. If the contract only
favored the augmented detector by
construction, every coordinate would
collapse to the same row. The observed
leaders differ by coordinate.

\IfFileExists{results/table_metric_neutrality.tex}{
\begin{table}[H]
\centering
\small
\resizebox{0.96\linewidth}{!}{\begin{tabular}{lllrrr}\toprule
Coordinate & All leader & Clean leader & All score & Clean score & Gap \\\midrule
duration\_guard & contract\_tcn\_aug & dilated\_cnn & 0.627 & 0.357 & 0.270 \\
fragmentation\_guard & contract\_tcn\_aug & dilated\_cnn & 0.851 & 0.411 & 0.440 \\
missing\_guard & dilated\_cnn & dilated\_cnn & 0.949 & 0.949 & 0.000 \\
offset\_guard & contract\_tcn\_aug & spectral\_flux & 0.658 & 0.576 & 0.082 \\
onset\_guard & contract\_tcn\_aug & spectral\_flux & 0.706 & 0.609 & 0.097 \\
silence\_guard & contract\_tcn\_aug & adaptive\_energy & 0.918 & 0.675 & 0.242 \\
spurious\_guard & contract\_tcn\_aug & adaptive\_energy & 0.929 & 0.681 & 0.248 \\
\bottomrule\end{tabular}
}
\caption{Coordinate leaders across all detectors and across clean non contract detectors.}
\label{tab_metric_neutrality}
\end{table}
}{}
The matched regime table uses one
backbone family and union output
across all rows. It separates clean
training, augmentation, edge
objective, and stable decoding without
changing the acoustic feature space or
the output task.

\IfFileExists{results/table_matched_regime.tex}{
\begin{table}[H]
\centering
\small
\resizebox{0.92\linewidth}{!}{\begin{tabular}{llrrrr}\toprule
Regime & Decode & Boundary F1 & Logic & Logic 20 ms & Span \\\midrule
same\_backbone\_clean & transition & 0.400 & 0.496 & 0.464 & 0.090 \\
same\_backbone\_augmented & transition & 0.827 & 0.801 & 0.725 & 0.181 \\
same\_backbone\_aug\_edge & transition & 0.831 & 0.801 & 0.722 & 0.179 \\
same\_backbone\_aug\_edge\_stable & stable & 0.832 & 0.788 & 0.700 & 0.198 \\
\bottomrule\end{tabular}
}
\caption{Capacity matched union trace regimes with a shared boundary aware backbone.}
\label{tab_matched_regime}
\end{table}
}{}
The ablation table reports Holm
corrected paired bootstrap
significance against the clean dilated
baseline on boundary F1. The
attribution table turns the same rows
into component deltas with confidence
intervals.

\begin{table}[H]
\centering
\small
\resizebox{0.88\linewidth}{!}{\begin{tabular}{lrr}\toprule
Contrast & Boundary delta & Logic delta \\\midrule
augmentation over clean & 0.388 [0.352, 0.426] & 0.268 [0.243, 0.293] \\
edge decode over augmentation & 0.020 [0.008, 0.033] & 0.031 [0.024, 0.037] \\
class indexed union shift & -0.065 [-0.081, -0.049] & -0.086 [-0.097, -0.075] \\
\bottomrule\end{tabular}
}
\caption{Paired bootstrap attribution of ablation components on the controlled benchmark.}
\label{tab_ablation_attribution}
\end{table}
The three seed audit repeats the main
intervention contrasts on smaller
independent controlled splits. It is a
robustness audit, not a full estimate
of training variance. The augmentation
and union full contrasts stay positive
in this audit, while class full
remains harder and more variable
because it evaluates typed activity
rather than union activity.

\begin{table}[H]
\centering
\small
\resizebox{0.88\linewidth}{!}{\begin{tabular}{lrrrrrr}\toprule
Variant & Runs & Boundary & SD & Logic & SD & Edge F1 \\\midrule
clean\_dilated & 3 & 0.401 & 0.009 & 0.516 & 0.008 & 0.745 \\
augmentation\_only & 3 & 0.772 & 0.024 & 0.721 & 0.004 & 0.772 \\
union\_full & 3 & 0.782 & 0.018 & 0.729 & 0.006 & 0.777 \\
class\_full & 3 & 0.667 & 0.032 & 0.617 & 0.024 & 0.749 \\
\bottomrule\end{tabular}
}
\caption{Independent scene and model seed robustness audit.}
\label{tab_seed_robustness}
\end{table}
The condition table and generated
condition figure remain in the
artifact. They identify where boundary
matching is lost without adding
another main text table.

\section{Guard analysis}
The guard table is the main diagnostic
artifact. Onset and offset can fail
for different reasons. Missing and
spurious guards separate sensitivity
from over firing. Silence guard
focuses on predicted activity outside
reference support. Duration and
fragmentation guards expose interval
shape.

Reverberation primarily damages
offset, silence, duration, and
fragmentation behavior. The reason is
structural. A decaying tail can remain
plausible activity even after the
reference event has ended. A detector
may keep firing and still retain a
moderate frame score.

Noise damages threshold baselines by
filling silence with energy. That
failure appears in spurious and
silence guards. Jitter damages
boundary guards by shifting
transitions without necessarily
destroying event overlap.

The contract aware row is coordinate
specific. Spectral flux has the
strongest onset and offset guard
values in the aggregate guard table.
The contract aware detector is
strongest on silence, spurious
activity, duration, and fragmentation.
Its value is therefore a specific
repair of tails and event
decomposition rather than a universal
boundary precision result.

The logical profile therefore provides
a diagnosis rather than only an
ordering. A weak missing guard
indicates sensitivity loss. A weak
spurious guard indicates calibration
failure. A weak fragmentation guard
indicates temporal consolidation
failure.

This diagnostic split matters for
niche applications. A wake interface
can weight spurious guard more than
duration guard. A speech alignment
tool can weight onset and offset
symmetry. A retrieval system can
weight fragmentation control. The
formula layer makes those choices
explicit.

\begin{table}[H]
\centering
\small
\resizebox{0.92\linewidth}{!}{\begin{tabular}{lrrrrrr}\toprule
Property & Energy & Flux & Logistic & CNN & Dilated & Contract \\\midrule
onset\_guard & 0.415 & 0.609 & 0.464 & 0.347 & 0.365 & 0.706 \\
offset\_guard & 0.427 & 0.576 & 0.414 & 0.365 & 0.388 & 0.658 \\
missing\_guard & 0.703 & 0.903 & 0.777 & 0.890 & 0.949 & 0.922 \\
spurious\_guard & 0.681 & 0.570 & 0.518 & 0.567 & 0.597 & 0.929 \\
silence\_guard & 0.675 & 0.561 & 0.510 & 0.557 & 0.587 & 0.918 \\
duration\_guard & 0.295 & 0.296 & 0.218 & 0.300 & 0.357 & 0.627 \\
fragmentation\_guard & 0.402 & 0.403 & 0.391 & 0.379 & 0.411 & 0.851 \\
\bottomrule\end{tabular}
}
\caption{Parsed and interval guard satisfaction.}
\label{tab_logic}
\end{table}
The guard table is the diagnostic
core. The generated profile figure
remains in the artifact, while the
manuscript keeps exact values in the
table.

\section{Selection evidence}
The contract changes the empirical
question from one ranking to a family
of coordinate rankings. The selection
audit chooses the best detector
separately for each guard coordinate
using the same benchmark outputs.

The selected detector changes by
coordinate. Spectral flux gives the
strongest onset and offset
coordinates, the dilated convolutional
baseline gives the strongest missing
activity coordinate, and the contract
aware detector gives the strongest
silence, spurious, duration, and
fragmentation coordinates.

This is the operational link between
formalism and empirical gain. A
downstream system that fails on late
triggers would not inspect the same
row as a system that fails on silence
contamination or fragmentation. The
parsed contract makes that difference
measurable without rerunning the
acoustic models.

The real corpus run gives a second
link. Union activity scores on MAESTRO
Real are high, but class indexed
scores are much lower. This is a
diagnostic result showing that union
traces can hide typed boundary
failure.

The selection table remains in the
artifact. It reports the detector that
maximizes each coordinate and the
aggregate boundary score of that
selected detector.

\section{Tolerance and data audits}
A benchmark whose conclusion changes
sharply with tolerance is fragile. We
therefore rerun the monitor over the
same predictions at five tolerances
from twenty to one hundred sixty
milliseconds. The formulas are
regenerated with the new tolerance and
parsed again.

The tolerance sweep is a sensitivity
audit. If a method looks good only at
a loose tolerance, the benchmark
exposes that dependence.

The stability table integrates logic
over the tolerance interval and
reports the vertical span of the
curve. A high integral with a large
span means a strong but policy
sensitive method. A lower integral
with a small span means a stable but
weaker method.

The matched regime table adds this
issue to the model design rather than
only to the discussion. It compares
the same boundary aware backbone under
clean training, augmented training,
edge supervision, and stable decoding.
Stable decoding is not treated as a
complete solution to policy
sensitivity. It is one declared
decoding objective whose effect is
measured beside the others.

The learning curve trains the residual
dilated temporal convolutional
detector on fifteen, thirty five,
sixty five, and one hundred percent of
the clean training split. It then
evaluates the same test set. This asks
whether the parsed contract improves
smoothly with more clean evidence.

In the observed run the residual
dilated curve is nonmonotone. Smaller
clean subsets can produce a better
trace contract than the full clean
split. The pattern indicates that
clean supervision alone does not
determine robustness under the
compound and reverberant test
conditions.

The added audits make the benchmark
work as a measurement instrument. The
reader can see whether a result is
stable across temporal policy and data
budget.

These audits make the benchmark a
substantial local workload. The
additional convolutional trainings and
tolerance passes remain bounded enough
for direct reproduction.

\begin{table}[H]
\centering
\small
\resizebox{0.92\linewidth}{!}{\begin{tabular}{lrrrrrr}\toprule
Tolerance & Energy & Flux & Logistic & CNN & Dilated & Contract \\\midrule
20 ms & 0.462 & 0.506 & 0.421 & 0.458 & 0.497 & 0.724 \\
40 ms & 0.514 & 0.560 & 0.470 & 0.487 & 0.522 & 0.802 \\
80 ms & 0.543 & 0.616 & 0.523 & 0.515 & 0.541 & 0.862 \\
120 ms & 0.560 & 0.649 & 0.559 & 0.534 & 0.551 & 0.889 \\
160 ms & 0.572 & 0.671 & 0.586 & 0.551 & 0.562 & 0.905 \\
\bottomrule\end{tabular}
}
\caption{Mean parsed logic score under boundary tolerance changes.}
\label{tab_tolerance}
\end{table}
The sweep table is read by columns.
The generated tolerance figure and
stability CSV remain in the artifact
for visual inspection and curve
summaries.

The learning curve addresses a
different axis. It asks whether clean
supervision alone moves the trace
contract in a stable direction. The
figure and CSV are retained in the
artifact to keep the manuscript
focused on the contract and ablation
audits.

\section{Monitor invariants}
The theory section records the
invariants needed by the benchmark.
Parser determinacy and finite
termination are treated as engineering
checks. They prevent a written
contract from drifting away from the
monitor that evaluates it.

The main semantic object is the guard
vector. Let $T_h^n$ be the set of
binary traces of length $n$ at frame
step $h$. A monitor instance maps
$T_h^n \times T_h^n$ into $[0,1]^m$
where $m$ is the number of guards. The
mean logic score is a linear
functional applied after this vector
is produced.

This order matters. If the linear
functional is chosen first, the
benchmark hides which obligations were
important. If the guard vector is
produced first, a downstream
application can choose weights after
seeing the semantic components.

The frame guards are extensional. If
two trace pairs induce the same atom
arrays, every parsed frame formula
receives the same score. The interval
guards are extensional with respect to
extracted maximal intervals and the
matching relation. No waveform
information enters the monitor after
prediction.

The monitor is monotone only where the
semantics warrants it. Increasing the
boundary tolerance cannot make a
neighborhood formula harder to satisfy
when the obligation set is fixed. It
can change interval matching, so the
full guard vector need not be globally
monotone.

The bounded operators induce a
locality hierarchy. A formula of
temporal depth $d$ and total radius
$r$ can inspect only a finite window
determined by its nested modalities.
Event clauses add global run structure
after interval extraction.

This hierarchy explains why duration
and fragmentation belong to the event
sort. They are properties of extracted
components and their matching relation
rather than of one bounded frame
neighborhood.

The result separates syntax, frame
semantics, event semantics, streaming
synthesis, and aggregation. Each layer
has a named contract and a visible
failure mode.

This section therefore supports the
implementation and the audit protocol.
The stronger scientific claim is
empirical and appears in the
selection, tolerance, and real corpus
tables.

\begin{proposition}[Frame extensionality]
Let $E_1$ and $E_2$ be two environments that assign the same Boolean array to every atom occurring in $\varphi$.
Then $\sem{\varphi}_{E_1}=\sem{\varphi}_{E_2}$.
\end{proposition}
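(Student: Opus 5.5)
The plan is structural induction on the abstract syntax tree of $\varphi$. By the parser invariant lemma, each accepted formula string has a unique tree, so $\sem{\varphi}_E$ is well defined by recursion on that tree. I will prove the stronger statement that, for every subformula $\psi$ of $\varphi$, $\sem{\psi}_{E_1}=\sem{\psi}_{E_2}$ as Boolean arrays on the common grid. The atoms occurring in any subformula are among those occurring in $\varphi$, so the hypothesis is inherited by every subformula and the induction goes through. I also assume that $E_1$ and $E_2$ share the frame step $h$ and the length $n$. The arrays assigned to atoms of $\varphi$ agree, so they have equal length, and $h$ is fixed by the trace grid.

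The base case is an atom $a$ occurring in $\varphi$. Here $\sem{a}_{E_1}=E_1(a)=E_2(a)=\sem{a}_{E_2}$, because evaluating an atom is an array lookup.

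For negation, conjunction, disjunction and implication, the frame semantics definition makes the value at frame $i$ a fixed Boolean function of the children's values at frame $i$. The children agree by the induction hypothesis, so their images agree pointwise.

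For the bounded operators $N_\epsilon$, $F_\epsilon$, $G_\epsilon$ and $U_\epsilon$, I first note that the radius $r_\epsilon=\lceil \epsilon/h\rceil$ depends only on $\epsilon$, which is syntax, and on $h$, which is common to both environments. Next, the clipped window at each frame $i$ depends only on $i$, $r_\epsilon$ and $n$. The truth value at frame $i$ is then an existential or universal condition, or in the case of until a combination of the two, over the children's arrays restricted to that window. Those arrays coincide by the induction hypothesis, so the result coincides at every $i$.

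The only point needing care is the bounded modality step: it requires that no quantity other than the children's arrays enters the semantics. Concretely, this means the radius and the window must not depend on the environment. I will settle it by stating the equal-$h$ and equal-$n$ convention explicitly before the induction. One might also worry whether the implemented prefix-sum evaluation from the monitoring bound proposition matches the declarative semantics. I will keep that out of this proof, because the statement concerns $\sem{\cdot}$ as defined, not its implementation.
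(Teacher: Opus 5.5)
Your proof is correct and follows the paper's argument: structural induction on the parsed tree, with atoms agreeing by hypothesis, Boolean nodes agreeing by pointwise operations, and bounded modalities agreeing because the same window operator is applied to equal child arrays. The only difference is presentational: the paper phrases the temporal case through the prefix-sum window operator, while you argue from the declarative window semantics and state explicitly that the two environments share $h$ and $n$, which the paper leaves implicit.
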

\begin{proof}
The proof is by induction on the parsed tree.
Atoms agree by assumption.
Boolean nodes preserve equality by pointwise operations.
Bounded temporal nodes apply the same prefix sum window operator to equal child arrays.
Thus every node has equal valuation in the two environments.
\end{proof}

\begin{proposition}[Neighborhood monotonicity]
If $0 \leq \epsilon \leq \delta$, then $\sem{N_{\epsilon}\varphi}_E(i) \leq \sem{N_{\delta}\varphi}_E(i)$ for every frame $i$.
\end{proposition}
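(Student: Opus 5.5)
The plan is to unfold the Frame semantics definition directly and compare the two witness windows. Nothing beyond the definition of $N_\epsilon$ and the radius projection $r_\epsilon=\lceil \epsilon/h\rceil$ should be needed. In particular, the child valuation $\sem{\varphi}_E$ is the same Boolean array on both sides, since $E$ and $\varphi$ are fixed. So no induction on $\varphi$ is required, and Frame extensionality is not invoked.

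The first step is monotonicity of the grid radius. Since $h>0$ and $0\leq\epsilon\leq\delta$, we have $\epsilon/h\leq\delta/h$. The ceiling function is monotone, so $r_\epsilon\leq r_\delta$. This is the only place the frame step enters. It is also the one point worth stating carefully, because a rounding convention that is not monotone, such as nearest-integer rounding with an inconsistent tie rule, could in principle break the claim. The paper's conservative ceiling choice is exactly what makes this step go through.

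The second step is nesting of the clipped windows. For a fixed frame $i$, set
\[
W_\epsilon(i)=\{j : 0\leq j<n,\ |i-j|\leq r_\epsilon\}.
\]
Then $r_\epsilon\leq r_\delta$ gives $W_\epsilon(i)\subseteq W_\delta(i)$. Clipping to $[0,n)$ is the same intersection on both sides, so it preserves the inclusion.

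The third step treats Booleans as $\{0,1\}$ with $\leq$ read as implication. If $\sem{N_\epsilon\varphi}_E(i)=0$, the inequality is trivial. Otherwise there is a witness $j\in W_\epsilon(i)$ with $\sem{\varphi}_E(j)=1$. That same $j$ lies in $W_\delta(i)$, so $\sem{N_\delta\varphi}_E(i)=1$.

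Finally, I would check consistency with the implementation. The proposition on frame monitoring evaluates modalities by prefix sums, and that evaluation agrees with the windowed existential reading because it computes whether the count of true child frames over $W_r(i)$ is positive. So the argument applies to the monitored arrays, not only to the abstract semantics.

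As a remark, the same argument gives the score-level corollary mentioned in the text. Under a fixed obligation mask, pointwise domination implies that the obligation-restricted mean can only increase. It also gives the analogous monotonicity for $F_\epsilon$, and for the witness side of $U_\epsilon$, but not for $G_\epsilon$, which is antitone.
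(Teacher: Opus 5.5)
Your proof is correct and is essentially the paper's argument: the clipped neighborhood windows are nested, and existential truth over the smaller window carries over to the larger one. The only addition is that you state explicitly that $r_\epsilon=\lceil\epsilon/h\rceil\leq\lceil\delta/h\rceil=r_\delta$; the paper's two-line proof uses this step implicitly when it speaks of the frames within distance $\epsilon$ of $i$.
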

\begin{proof}
The set of frames within distance $\epsilon$ of $i$ is contained in the set of frames within distance $\delta$ of $i$.
Existential truth over the smaller window implies existential truth over the larger window.
\end{proof}

\begin{proposition}[Event clause stability under fixed matching]
Fix the extracted interval families and the matcher relation $M$.
If every endpoint perturbation has magnitude at most $\eta$, every matched duration margin is farther than $2\eta$ from its threshold, and the covering counts used by fragmentation do not change, then duration and fragmentation clause values are unchanged.
\end{proposition}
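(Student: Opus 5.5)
The argument is a direct check of the two event clauses from the event sort table. Since $M$ is fixed, each clause's obligation set is fixed too: the matched pairs for duration, and the reference intervals for fragmentation. By the definition of event clause value, each clause value is a mean of a Boolean predicate over that fixed finite set. So it suffices to show that every individual predicate value is unchanged under the perturbation. The neutral empty value is also unaffected, because neither the obligation set nor the counter evidence changes.

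For the duration guard, I take a matched pair $(r,p)\in M$ with endpoints $r_0,r_1,p_0,p_1$. Let the perturbed endpoints be $r_0',r_1',p_0',p_1'$, each within $\eta$ of the original. I set $D=\bigl||r|-|p|\bigr|$, where $|r|-|p| = (r_1-r_0)-(p_1-p_0)$. Each of the four endpoints moves by at most $\eta$, so the signed difference moves by at most $4\eta$. By the reverse triangle inequality, $D$ moves by at most that much as well.

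This step is the main obstacle. The hypothesis speaks of a margin of $2\eta$, not $4\eta$, so the naive bound is too weak. I would first read ``endpoint perturbation of magnitude at most $\eta$'' as bounding the change in each interval's length, since $|r|$ and $|p|$ each change by at most $\eta$. Then $D$ changes by at most $2\eta$. The margin hypothesis $|D-2\epsilon|>2\eta$ then forces the predicate $D\le 2\epsilon$ to have the same truth value before and after the perturbation. If the paper instead intends per-endpoint bounds, I would record the needed sharpening explicitly: either a $4\eta$ margin, or the common-shift case in which both endpoints of each interval move together. I would prefer to fix the length-change reading and state it clearly.

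For the fragmentation guard, the predicate for a reference $r$ is ``$r$ is matched and covered by at most one prediction.'' Matchedness depends only on $M$, which is fixed. The covering count is unchanged by hypothesis, so each predicate value is preserved. Averaging over the unchanged obligation sets then gives equal clause values, which completes the proof.
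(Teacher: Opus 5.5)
Your skeleton matches the paper's proof. With $M$ fixed, the obligation sets are fixed. Each clause is a mean of predicate values over those sets. Fragmentation is preserved because matchedness and the covering count are fixed. Duration is preserved because $D=\bigl|\,|r|-|p|\,\bigr|$ cannot cross $2\epsilon$.

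The gap is in how you bound the movement of $D$ by $2\eta$. Your preferred reading, that an endpoint perturbation of magnitude $\eta$ means $|r|$ and $|p|$ each change by at most $\eta$, replaces the hypothesis rather than reading it. Moving both endpoints of one interval by $\eta$ can change its length by $2\eta$. So the clause ``since $|r|$ and $|p|$ each change by at most $\eta$'' does not follow from the statement.

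The paper's $2\eta$ is exactly that single-interval bound: one interval's duration changes by at most $2\eta$ because each of its two endpoints moves by at most $\eta$. This controls $D$ once you observe that only one side of the pair moves. The reference family (the annotation) is held fixed, and the perturbation acts on predicted endpoints. Then $|D'-D|\le\bigl|\,|p'|-|p|\,\bigr|\le 2\eta$ by the reverse triangle inequality. The margin hypothesis $\bigl|D-2\epsilon\bigr|>2\eta$ then preserves the predicate's truth value, using the per-endpoint hypothesis as written.

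Your $4\eta$ observation is correct, and it is why the one-sided reading is forced rather than optional. The paper's one-line argument leaves this implicit. If reference endpoints also move, the stated margin fails. Take $r=[1,2)$, $p=[1,2.5)$, $\epsilon=0.2$ and $\eta=0.04$, so $D=0.5$ lies $0.1>2\eta$ above the threshold $0.4$. Move every endpoint by $\eta$ to get $r'=[0.96,2.04)$ and $p'=[1.04,2.46)$. This keeps the match and the covering count but gives $D'=0.34\le 0.4$, which flips the duration predicate.

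So the right repair is to state that only predicted endpoints are perturbed, not to redefine $\eta$ as a length bound. Your fallback options (a $4\eta$ margin or common shifts) are then unnecessary.
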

\begin{proof}
The duration of a matched interval pair changes by at most $2\eta$ under endpoint perturbation.
The margin assumption prevents any duration predicate from crossing its threshold.
The fragmentation predicate depends only on the covering count and whether the reference event is matched.
Both are fixed by assumption, so the averaged event clause values are unchanged.
\end{proof}

\section{Finite basis checks}
The basis checker uses observational
equivalence over a finite calibration
set. Two clauses are equivalent when
they return the same value on every
trace pair in that set. This
equivalence is the right one for
benchmark compilation because the
report observes exactly those
calibration traces.

The ancillary theory helper computes
exact truth signatures for frame
formulas on small finite universes. It
also computes clause signatures,
temporal depth, radius,
satisfiability, and duplicate clause
minimization.

A candidate basis can be reduced by
keeping one representative from each
nonconstant observed equivalence
class. Any generated clause then maps
to a retained coordinate or to a
constant coordinate that cannot
separate risk pairs.

The statement is coverage for a
bounded candidate family on a declared
finite calibration set. Once the atom
set, depth bound, radius grid, event
predicates, matcher family, and
calibration set are fixed, the
retained basis is determined.

Minimization is then a separate
operation. Duplicate minimization
removes equivalent coordinates. Risk
minimization searches for a smaller
separating subset of the retained
basis under the declared risk order.

Monitor optimization follows the same
signatures. Equivalent subformulas can
share one evaluated array. Formulas
with smaller lookahead are emitted
earlier by the streaming monitor. The
optimized monitor is a directed
acyclic graph over formula signatures
rather than a list of repeated trees.

Satisfiability is decidable on every
bounded finite universe by
enumeration. The implementation uses
it as a small specification check.

\begin{definition}[Observational equivalence]
For a finite calibration set $\mathcal W$, two clauses $c$ and $d$ are observationally equivalent when $c(w)=d(w)$ for every $w\in\mathcal W$.
\end{definition}

\begin{definition}[Retained basis]
Let $\mathcal B$ be a finite candidate basis.
A retained basis contains one representative from every nonconstant observational equivalence class of $\mathcal B$.
\end{definition}

\begin{proposition}[Bounded satisfiability check]
For a fixed atom set, trace length, frame step, and formula, satisfiability is decidable.
\end{proposition}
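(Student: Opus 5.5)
The plan is to reduce satisfiability to a finite search and then supply an explicit decision procedure. We first fix what "satisfiable" means. A formula $\varphi$ is satisfiable over the fixed atom set $A$, trace length $n$, and frame step $h$ when some environment $E$ assigns each atom a Boolean array in $\{0,1\}^n$ and $\sem{\varphi}_E$ is true at some frame $i$ with $0\le i<n$. If the intended notion is instead truth at every obligated frame, or truth at frame $0$, the same argument applies with a different final check, so the choice does not affect decidability.

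The first step is finiteness. By Frame extensionality, $\sem{\varphi}_E$ depends only on the arrays that $E$ assigns to the atoms occurring in $\varphi$. These atoms lie in the finite set $A$, so it suffices to range over the $2^{|A|n}$ environments $E\colon A\to\{0,1\}^n$, which we enumerate in lexicographic order.

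Some atoms are derived rather than free. For example, \texttt{ref\_onset} is determined by \texttt{ref\_active}. In that case we enumerate only the primitive masks (reference and prediction activity, and class masks if present) and compute the derived edge atoms from them using the onset and offset definitions. This still gives a finite family of at most $2^{cn}$ environments, where $c$ is the number of primitive masks. This enumeration also ensures that only realizable atom arrays are counted, which is the main subtlety.

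The second step is computability of each check. The radii $r_\epsilon=\lceil\epsilon/h\rceil$ are fixed integers, since $\varphi$ and $h$ are fixed. For each enumerated $E$ we evaluate $\sem{\varphi}_E$ bottom up over the parsed tree. The parser invariant guarantees that this tree is unique, and each node computation is a finite pointwise or clipped-window operation. The Implementation bound for frame monitoring then gives $O(kn)$ work per environment. We scan the resulting array for a true frame.

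The procedure halts after finitely many environments. It answers yes exactly when some environment yields a true frame, so it decides satisfiability, with total cost $O(2^{cn}kn)$. Non-emptiness of the window for $N_\epsilon$ and for $F_\epsilon$ follows from clipping, but decidability does not rely on it.
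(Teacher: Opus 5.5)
Your proposal is correct and follows the paper's argument: there are finitely many Boolean environments over the fixed atoms and trace length, the evaluator terminates on each one, and the formula is satisfiable exactly when some evaluation has a true frame. Restricting the enumeration to primitive masks and computing the edge atoms from them decides a slightly narrower, realizable notion of satisfiability than the paper's enumeration over free atoms, but it is the same finite-search argument and both readings are decidable.
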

\begin{proof}
There are finitely many Boolean environments over the fixed atoms and trace length.
The evaluator terminates on each environment.
The formula is satisfiable exactly when at least one evaluation contains a true frame.
\end{proof}

\begin{proposition}[Basis representative coverage]
Every nonconstant clause in a finite candidate basis is observationally equivalent to exactly one representative in its retained basis.
\end{proposition}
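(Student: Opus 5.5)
The plan is to treat observational equivalence on the finite calibration set $\mathcal W$ as an honest equivalence relation on $\mathcal B$, and then read off the claim from the definition of a retained basis. First I will check that the relation $c\sim d \iff c(w)=d(w)$ for all $w\in\mathcal W$ is reflexive, symmetric and transitive. This is immediate because it is the kernel of the signature map $\sigma\colon \mathcal B\to \mathbb R^{\mathcal W}$, $\sigma(c)=(c(w))_{w\in\mathcal W}$: two clauses are equivalent exactly when they have the same signature. Since $\mathcal W$ is finite, signatures are finite tuples, so the classes of $\sim$ partition $\mathcal B$ into finitely many disjoint blocks, one per observed signature.

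Next I will note that constancy is a class property. A clause is constant on $\mathcal W$ when its signature is a constant tuple, and equivalent clauses share signatures. So each class is either wholly constant or wholly nonconstant, and a nonconstant clause $c$ lies in a nonconstant class $[c]$.

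For existence, the definition of a retained basis selects one representative $\hat c\in[c]$ for every nonconstant class. Hence $\hat c\sim c$.

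For uniqueness, suppose $c\sim \hat d$ for some representative $\hat d$ in the retained basis. Then $\hat d\in[c]$. Representatives of distinct classes lie in disjoint blocks, and the definition retains exactly one representative per class, so $\hat d=\hat c$.

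The only real obstacle is interpretive rather than technical. "One representative from every class" must be read as "exactly one", meaning the retained basis contains no two equivalent clauses. If the definition only guaranteed at least one, uniqueness would fail. I will make this explicit by taking the retained basis to be the image of a choice function on nonconstant classes, which is exactly the duplicate-collapse described in the text: keep the first clause in source order per signature.

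Finiteness of $\mathcal B$ is needed only to make the construction effective, by enumerating signatures. It is not needed for the logical argument.
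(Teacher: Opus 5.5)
Your proposal is correct and follows the paper's own argument: observational equivalence partitions $\mathcal B$ into disjoint classes, the retained basis holds one representative per nonconstant class, and each nonconstant clause maps to its class representative. Your added checks make precise what the paper's short proof leaves implicit. These are that constancy is a property of the whole class, and that ``one representative'' must mean ``exactly one'', which the uniqueness claim needs.
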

\begin{proof}
Observational equivalence partitions the finite basis into disjoint classes.
The retained basis contains one representative from each nonconstant class.
Thus every nonconstant clause maps to its class representative.
Constant clauses are removed because they cannot separate any risk pair.
\end{proof}

\begin{proposition}[Strict temporal hierarchy]
On a grid with at least two frames, the depth one language is strictly more expressive than the depth zero language.
\end{proposition}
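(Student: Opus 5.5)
The plan has two parts. First, the depth one language can express everything the depth zero language can, since every depth zero formula is itself a depth one formula. Second, exhibit one depth one formula whose valuation differs from that of every depth zero formula on some environment.

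Depth zero formulas contain no temporal operators, so they are built only from atoms with the pointwise connectives. By induction on the parsed tree, exactly as in the proof of Frame extensionality, the valuation of a depth zero formula $\psi$ at frame $i$ depends only on the atom values at that same frame $i$. More precisely, there is a Boolean function $f_\psi$ with $\sem{\psi}_E(i)=f_\psi(E(a_1)(i),\dots,E(a_k)(i))$ for all $E$ and $i$. The induction is routine: atoms are lookups at $i$, and negation, conjunction, disjunction and implication act pointwise.

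For the separating formula, fix an atom $a$, take $n\geq 2$, and choose a tolerance $\epsilon>0$. Then $r_\epsilon=\lceil \epsilon/h\rceil\geq 1$, and the witness is $N_\epsilon a$ (alternatively $F_\epsilon a$). Consider two environments: $E_1$ assigns $a$ the array with $a(0)=0$ and $a(1)=1$, and $E_2$ assigns $a$ the all zero array. Any other atoms are set to zero in both. The two environments agree on every atom at frame $0$. Now evaluate the candidate at frame $0$. By the frame semantics, $\sem{N_\epsilon a}_{E_1}(0)=1$, because frame $1$ lies within radius $r_\epsilon$ and inside the clipped window. On the other hand $\sem{N_\epsilon a}_{E_2}(0)=0$. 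Any depth zero $\psi$, however, gives equal values at frame $0$ under $E_1$ and $E_2$ by the locality fact above. So no depth zero formula is equivalent to $N_\epsilon a$.

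The only point needing care is the convention $\epsilon>0$. At $\epsilon=0$ the radius is $0$, and $N_0 a$ collapses to $a$. The statement must therefore be read with a positive tolerance, or equivalently with some nonzero radius on the grid. The hypothesis of at least two frames is exactly what keeps frame $1$ inside the clipped window. I would state both conditions explicitly when writing out the proof.
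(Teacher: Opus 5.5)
Your proposal is correct and follows the paper's argument: the paper also separates two environments that agree at frame zero and differ only at frame one, using the fact that a depth zero formula evaluated at frame zero sees only the frame zero atoms. The differences are cosmetic. The paper's witness is $F_h a$, whose radius is exactly one, so your $\epsilon>0$ caveat is built in. You use $N_\epsilon a$ with any positive tolerance, and you also spell out the locality induction and the inclusion direction, which the paper leaves implicit.
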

\begin{proof}
Let $a$ be an atom.
At frame zero, the formula $F_h a$ can distinguish two environments that agree at frame zero and differ only at frame one.
A depth zero formula at frame zero depends only on atoms at frame zero.
It cannot distinguish those environments.
\end{proof}

\begin{proposition}[Implementation invariant for DAG monitoring]
Replacing repeated equivalent subformulas by one shared node preserves every frame valuation.
\end{proposition}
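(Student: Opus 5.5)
The plan is to argue by structural induction on the parsed tree, in the same way as Frame extensionality, but run on the shared DAG. First I fix what ``equivalent'' means. I will read it as \emph{valuation-equivalent on the environment in use}: two subformula nodes $\psi_1,\psi_2$ are merged only when $\sem{\psi_1}_E=\sem{\psi_2}_E$. The strongest and most natural sufficient condition is syntactic identity of the subtrees, including identical atoms, operators, and the projected radii $r_\epsilon=\lceil\epsilon/h\rceil$. A second sufficient condition is agreement of the exact truth signatures computed by the theory helper over the full finite universe of environments. Either way, the hypothesis I work from is that the retained node's array equals the array of every node it replaces.

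Next I define the DAG monitor. It evaluates each shared node once and stores its array, and every parent reads its child arrays from this store. The tree monitor instead recomputes each occurrence. The claim to prove is that for every tree node $\nu$ with image $\hat\nu$ in the DAG, the array the DAG computes at $\hat\nu$ equals $\sem{\nu}_E$.

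I prove this by induction on the height of $\nu$ in the original tree. For an atom, both monitors perform the same lookup in $E$. For a Boolean node, the induction hypothesis gives equal child arrays, and pointwise operations then give equal results. For a bounded node $N_\epsilon$, $F_\epsilon$, $G_\epsilon$ or $U_\epsilon$, the same clipped-window prefix-sum operator is applied to equal child arrays, with the same radius because $h$ is fixed, so the outputs agree. This is exactly the step pattern of Frame extensionality. If $\hat\nu$ is a representative standing in for $\nu$, the merge hypothesis $\sem{\hat\nu}_E=\sem{\nu}_E$ closes the case. Finally, the root valuation, and hence every obligation restricted score, is unchanged.

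The main obstacle is the equivalence notion, not the induction. Observational equivalence on a calibration set $\mathcal W$ does not imply equal valuations on a new trace. So the proposition holds only for merges justified by syntactic identity or by full-universe signatures, and I will state that restriction explicitly. A smaller point is well-foundedness: the sharing must not create cycles. This holds because merged nodes have equal height, so the DAG edges still strictly decrease height and the induction on height is well-founded.
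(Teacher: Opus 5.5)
Your argument is the paper's proof made explicit, and it is correct apart from one justification. The paper argues in three lines that equivalent subformulas have identical valuations, so sharing changes only the evaluation schedule. Your induction through atoms, pointwise Boolean nodes, and windowed modalities spells out why each parent then sees the same child arrays.

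Your caveat about the equivalence notion sharpens the paper rather than departing from it. The paper's proof asserts identical valuations only ``on every world in the calibration set''. Merging by calibration signatures therefore preserves valuations on those worlds, not on fresh traces. Syntactic identity, or signatures over all environments at the actual trace length $n$ and step $h$, gives preservation everywhere. A signature over a smaller universe does not suffice, because window clipping depends on $n$: on a one frame grid, $F_\epsilon a$, $G_\epsilon a$ and $a$ coincide.

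The wrong justification is your well-foundedness claim that merged nodes have equal height. That holds for syntactic identity. It fails for signature-based merging, which is what the paper's DAG ``over formula signatures'' uses: $a$ and $\neg\neg a$ have the same full-universe signature but heights $0$ and $2$.

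With a bad choice of representative, sharing does create a cycle. In $\neg\neg a$, letting the root represent its own subformula $a$ gives $\neg\neg a\to\neg a\to\neg\neg a$, and evaluation does not terminate. Even without a cycle, your induction on the height of $\nu$ breaks whenever $\hat\nu$ is taller than $\nu$. The step for $\nu$ then needs the claim at $\hat\nu$, whose children the induction hypothesis does not yet cover.

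Either of two repairs works:
\begin{itemize}
\item Pick a minimum-height representative in each class. The representative of a child is then no taller than the child, so representative heights strictly decrease along DAG edges.
\item Build the DAG bottom-up, creating a node only after its children's nodes exist and reusing an existing node when its signature is already present. Then induct on creation order, which is a topological order.
\end{itemize}
With either choice the rest of your induction goes through unchanged.
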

\begin{proof}
Equivalent subformulas have identical valuations on every world in the calibration set.
Sharing a node changes only the evaluation schedule.
The returned array is unchanged for each occurrence.
\end{proof}
\section{Streaming monitors}
Offline evaluation is convenient for
tables, but the same parsed formulas
synthesize bounded delay streaming
monitors. The implementation computes
a lookahead bound for every syntax
tree.

Atoms and negation have zero added
lookahead. Boolean nodes take the
maximum of their children. Future,
always, and until add their bounded
horizon. Neighborhood adds a symmetric
window and therefore a finite right
delay for online emission.

A streaming monitor buffers only the
frames needed to decide the oldest
pending obligation. Once the right
context for frame $i$ has arrived, the
monitor emits the verdict for $i$ and
discards older rows.

The monitor is causal with bounded
delay. It never changes a verdict
after emission because every operator
in the grammar has a finite horizon.

The ancillary implementation exposes
this object as a small state machine.
It accepts frame environments one at a
time and returns obligation restricted
verdicts as they become determined.

This closes the gap between an offline
audit table and runtime verification
usage. The paper reports offline
aggregates, while the code also
contains the bounded memory monitor
that would support an online boundary
gate.

\begin{definition}[Lookahead]
The lookahead $L(\varphi)$ of a parsed formula is the maximum future time needed to decide its truth at the current frame.
\end{definition}

\begin{proposition}[Bounded delay monitor synthesis]
Every parsed frame formula has a streaming monitor that emits each frame verdict after at most $L(\varphi)$ seconds of delay.
\end{proposition}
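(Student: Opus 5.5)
The plan is structural induction on the parsed tree, using a frame-level lookahead. For every node define $\ell(\varphi)\in\mathbb N$ by
\[
\ell(a)=0,\quad \ell(\neg\varphi)=\ell(\varphi),\quad \ell(\varphi\circ\psi)=\max(\ell(\varphi),\ell(\psi))\ \text{for}\ \circ\in\{\wedge,\vee,\Rightarrow\},
\]
\[
\ell(F_\epsilon\varphi)=\ell(G_\epsilon\varphi)=\ell(N_\epsilon\varphi)=r_\epsilon+\ell(\varphi),\quad \ell(\varphi U_\epsilon\psi)=r_\epsilon+\max(\ell(\varphi),\ell(\psi)),
\]
with $r_\epsilon=\lceil\epsilon/h\rceil$. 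This matches the verbal rule for $L$ in the text. I then read $L(\varphi)=h\,\ell(\varphi)$ in seconds.

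The key claim, proved by induction, is a \emph{dependency lemma}: for every frame $i$, $\sem{\varphi}_E(i)$ is determined by the atom values at frames $j\le \min(n-1,i+\ell(\varphi))$. This is a windowed strengthening of the frame extensionality proposition, with the same induction skeleton.
\begin{itemize}
\item Atoms hold trivially, and pointwise Boolean nodes follow directly from the hypothesis on the children.
\item For $F_\epsilon$ and $G_\epsilon$, the child is consulted only at frames $j\le i+r_\epsilon$. Each such child value depends on frames up to $j+\ell(\varphi)\le i+r_\epsilon+\ell(\varphi)$.
\item For $N_\epsilon$, the window is symmetric, but its right edge is still $i+r_\epsilon$. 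Frames to the left of $i$ are already available, so they add no delay.
\item For $U_\epsilon$, both children are consulted only on $[i,i+r_\epsilon]$, which gives the stated bound.
\item Clipping at $n-1$ only shrinks windows. At the end of the trace the verdict is emitted at stream termination, so end-of-trace frames need an explicit convention, namely that the monitor flushes on the end marker.
\end{itemize}

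Given the lemma, I synthesize the monitor directly. It keeps a ring buffer of the last $\ell(\varphi)+R+1$ environments, where $R$ is the largest left radius occurring in $N$ nodes, summed along paths, so that past context is retained. The monitor also keeps the index of the oldest pending frame. When frame $i+\ell(\varphi)$ arrives, it evaluates the tree at $i$ on the buffered window using the offline evaluator. By the dependency lemma this equals the offline verdict, emits it, and pops frame $i$.

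Verdicts are never revised, because no later input can affect them. The delay is $\ell(\varphi)$ frames, that is, $h\,\ell(\varphi)=L(\varphi)$ seconds. The obligation mask $\omega$ is handled the same way with delay $\ell(\omega)$, and restricted verdicts are emitted after $\max(\ell(\varphi),\ell(\omega))$ frames.

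The main obstacle is the bookkeeping of past context for nested $N$ operators together with the exact correspondence between the seconds-valued $L$ and the ceiling-projected frame radii. I would fix $L$ as $h\,\ell$ by definition rather than as $\sum\epsilon$, since the latter can underestimate the delay by up to one frame per modality. I would also argue that left context is bounded by $h$ times a finite $R$, so memory stays bounded.
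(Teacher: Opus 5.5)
Your proposal is correct and follows the paper's own argument: induction on the parsed tree with zero lookahead at atoms, the larger child lookahead at Boolean nodes, and the declared radius added at $F_\epsilon$, $G_\epsilon$, $N_\epsilon$ and $U_\epsilon$, after which a finite buffer gives a monitor whose verdicts are never revised. You go beyond the paper's sketch in three places: you fix $L(\varphi)=h\,\ell(\varphi)$ with ceiling-projected radii, since a sum of written $\epsilon$ values could understate the delay; you add the end-of-trace flush; and you keep the left context that $N_\epsilon$ needs. One caution: running the offline evaluator on the buffered window needs your dependency lemma in a form that compares traces of different lengths and ignores frames beyond the left reach of the $N_\epsilon$ nodes, which the same induction gives.
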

\begin{proof}
The proof is by induction on the parsed tree.
Atoms need no future frames.
Boolean nodes wait for the larger child lookahead.
Bounded future, always, neighborhood, and until add only their declared finite bound to the child lookahead.
Therefore the root has finite lookahead and a finite buffer suffices.
\end{proof}
\section{Machine checked finite core}
A Lean 4 development checks the finite
core used by the monitor
\citep{demoura2021lean4}. The file
formalizes Boolean traces as finite
lists, obligation masks, satisfaction
masks, violations, neighborhood
witnesses, and a formula evaluator
with implication.

The checked partition theorem states
that satisfied obligations plus
violated obligations equals the total
number of obligations for any two
finite masks. This is the algebra
behind obligation restricted scoring.
A guard value can be read as a
normalized count because every
obligated frame is either satisfied or
violated.

The checked monotonicity theorem
states that increasing a neighborhood
radius preserves any existing active
witness. This matches the paper
theorem for the parsed neighborhood
modality. The theorem is local to
frame neighborhoods and distinct from
interval matching.

The checked implication theorem
characterizes the Boolean implication
node used by the evaluator. It rules
out a mismatch between the parser
level implication and the monitor
level truth condition.

The Lean file is included as ancillary
code under the proof directory. The
recorded output contains only theorem
check evaluation values and no local
machine paths.

\begin{table}[H]
\centering
\small
\resizebox{0.92\linewidth}{!}{
\begin{tabular}{lll}
\toprule
Component & Checked in Lean & Outside the Lean scope \\\midrule
Finite masks & obligation partition and counts & acoustic feature extraction \\
Boolean implication & evaluator truth condition & parser implementation in Python \\
Neighborhood witness & radius monotonicity & interval matcher implementation \\
Formula core & sample evaluator values & neural detector training \\
Trace generator & none & waveform synthesis and external data loading \\
\bottomrule
\end{tabular}}
\caption{Scope of the Lean checks included with the ancillary proof file.}
\label{tab_lean_scope}
\end{table}
\[
\operatorname{Sat}(o,s)+\operatorname{Viol}(o,s)=\operatorname{Obl}(o,s).
\]
\[
\operatorname{Near}_{r}(x,i)\wedge r\leq r'
\Rightarrow
\operatorname{Near}_{r'}(x,i).
\]
\begin{table}[H]
\centering
\begin{tabular}{lr}
\toprule
Lean evaluation & Value \\\midrule
Obligations in sample mask & 4 \\
Satisfied obligations in sample mask & 2 \\
Violated obligations in sample mask & 2 \\
Neighborhood witness in sample trace & true \\
\bottomrule
\end{tabular}
\caption{Lean checked sample values for the finite trace core.}
\label{tab_lean}
\end{table}
\section{Worked trace example}
Consider a reference trace with one
event from one second to two seconds
and a prediction with one event from
one point zero six seconds to two
point four seconds. Frame overlap is
high. A frame aggregate may therefore
look benign. The parsed contract
separates the situation into smaller
facts.

Onset guard passes if the tolerance is
at least sixty milliseconds. Offset
guard fails at eighty milliseconds
because the predicted release is four
hundred milliseconds late. Duration
guard also fails because the matched
prediction is substantially longer
than the reference event.

Silence guard depends on how much of
the late tail lies outside the
expanded reference support. If the
guard radius is small, the extra
predicted activity is counted against
the detector. If the guard radius is
large, the same trace becomes less
damaging. The tolerance sweep is
therefore not cosmetic. It changes the
stated contract.

A second example has one reference
event and three short predicted events
inside it. Frame overlap may remain
high if the fragments cover most
active frames. Fragmentation guard
fails because the detector did not
preserve the event as a single unit.

These examples explain why the parser
matters. The formulas are not just
labels for old metrics. They identify
obligations over specific atoms and
neighborhoods. The interval clauses
identify obligations over matched
event structure.

A reader can reproduce such examples
by editing the masks in a short script
and calling the same monitor. No model
training is required. This makes the
logic testable independently of the
acoustic generator.

Worked traces also debug extensions.
If a new formula gives a surprising
score on a small three event trace,
the issue can be found before a full
benchmark run is launched.

The same monitor can be called
directly on a manually constructed
trace. This makes the logic testable
without training an acoustic detector.

\begin{proposition}[Late tail separation]
If a matched prediction has the same onset as its reference event and an offset error larger than the tolerance, then onset guard can pass while offset guard and duration guard fail.
\end{proposition}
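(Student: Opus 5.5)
The statement is existential in spirit ("can pass ... fail"), so I will prove it by exhibiting the guard values on a single matched pair, using the frame semantics, the obligation restricted score, and the duration row of the event sort table. Fix a reference interval $r=[t_0,t_1)$ and a matched prediction $p=[t_0,t_1+\Delta)$ with $\Delta>\epsilon$, where $\epsilon$ is the declared tolerance. I will assume the offset guard is the mirror of the onset guard, $\texttt{ref\_offset} \Rightarrow N_\epsilon\,\texttt{pred\_offset}$ under obligation $\texttt{ref\_offset}$. I will also assume the duration threshold is $2\epsilon$, as in the event sort table, and read "offset error larger than the tolerance" as strong enough that $\Delta>2\epsilon$ on the grid. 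Any other traces in the clip are kept away from $r$ and $p$ by more than the largest radius involved.

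\textbf{Onset.} The onset atoms of reference and prediction coincide at frame $i_0=t_0/h$. By the Vacuity control lemma, the onset guard is the mean of $\sem{N_\epsilon\,\texttt{pred\_onset}}_E$ over reference onset frames. At $i_0$ the witness $j=i_0$ satisfies $|i-j|=0\le r_\epsilon$, so the value is $1$, and the onset guard passes.

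\textbf{Offset.} At the reference offset frame $i_1=t_1/h$, the only predicted offset is at frame $(t_1+\Delta)/h$. This frame lies at distance $\lceil \Delta/h\rceil$-ish from $i_1$, which exceeds $r_\epsilon=\lceil\epsilon/h\rceil$ once $\Delta$ clears the ceiling projection, for example when $\Delta\ge \epsilon+h$. No other predicted offset lies in the window by the separation assumption, so $N_\epsilon$ is false at $i_1$ and the offset guard scores $0$ on that obligation. The obligation set is nonempty, so the neutral value one does not apply.

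\textbf{Duration.} With $M=\{(r,p)\}$ we get $\bigl||r|-|p|\bigr|=\Delta>2\epsilon$, so the predicate fails on the only obligated pair and the event clause value is $0$.

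The main obstacle is the grid interaction. The ceiling radius $r_\epsilon$ is conservative, so a displacement only slightly above $\epsilon$ may still fall inside the window. I will therefore state the offset hypothesis in frame units, $\Delta/h>r_\epsilon$. I also need the matcher to actually pair $r$ with $p$. By the interval candidate relation this holds: the pair overlaps and their onsets differ by $0\le 3\epsilon$, so the greedy matcher keeps it as the sole candidate.
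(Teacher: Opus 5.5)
Your proposal is correct and follows the same route as the paper's proof. The shared onset is a witness for $N_\epsilon$ at the reference onset obligation, the displaced predicted offset leaves no witness at the reference offset obligation, and the matched duration difference exceeds its bound. Your explicit hypotheses $\Delta/h > r_\epsilon$ and $\Delta > 2\epsilon$, together with isolation from other events, make precise what the paper's phrases ``outside the tolerance'' and ``fails when the excess length exceeds its bound'' leave implicit under the ceiling projection.
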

\begin{proof}
Onset guard samples reference onset obligations and checks a nearby predicted onset.
The shared onset satisfies that formula.
Offset guard samples reference offset obligations and checks a nearby predicted offset.
The predicted offset is outside the tolerance, so the formula fails at that obligation.
The duration clause compares interval lengths after matching and fails when the excess length exceeds its bound.
\end{proof}
\section{Comparison with conventional reports}
A conventional report usually begins
with a detector table. Such a table is
necessary because readers need a
compact comparison. It is not
sufficient because it tells little
about the operating contract.

Frame F1 weights every frame equally.
In long silent or long active regions
it can be dominated by interiors. A
detector with weak boundary placement
can still score well if it marks the
right broad region.

Boundary F1 improves the situation by
moving from frames to events. It still
compresses several failures into one
match decision. A prediction can match
an event and still have a bad
duration, late release, or internal
fragmentation.

Event based metrics with collars are
closer to the desired object. They
contain a temporal policy. The
difference here is that the policy is
written as parsed formulas and
interval clauses that can be swapped,
inspected, and recomputed.

The parsed contract complements
conventional metrics by exposing their
policy choices. Frame and boundary
scores stay in the paper because they
help readers anchor the new contract
scores.

The richer report also changes how a
negative result is read. A low
aggregate score may indicate a
detector failure, a harsh tolerance, a
mismatch between perturbation and
training distribution, or a particular
guard that is expensive for the model.
The guard vector separates those
cases.

The tolerance sweep adds another
layer. A method that improves rapidly
as tolerance grows may be learning
event presence but not boundary
precision. A method that stays weak
across tolerances has a deeper
detection problem.

The learning curve adds a resource
view. If logic improves with clean
data but reverberation remains weak,
more clean supervision alone is
unlikely to solve the hardest
condition. The benchmark can then
motivate perturbation training or a
different acoustic representation.

This is why the paper treats the guard
vector as primary. The scalar table is
only one projection of the finite
contract. Other applications may
choose a different projection without
changing the parsed obligations.

\section{Artifact assurance notes}
The parser is handwritten so that
source spans, precedence, and monitor
behavior stay in one compact
implementation.

The tokenizer recognizes implication,
parentheses, Boolean operators,
bounded temporal operators, and
identifiers. Bounded operators carry
numeric radii in seconds. The parser
converts those seconds into frame
radii during evaluation.

The abstract syntax tree uses one
dataclass per node type. Atoms read
Boolean arrays from the environment.
Boolean nodes evaluate their children
and apply pointwise operations.
Temporal nodes evaluate the child and
apply prefix sum window operators.

The reference implementation uses
explicit array operations and prefix
sums. Very long files can use
convolution kernels for modality
windows without changing the grammar.

The parser raises syntax errors rather
than silently accepting malformed
formulas. A misspelled operator stops
the run before any table is produced.

Unknown atoms also raise errors. A
formula cannot refer to an atom that
the trace environment does not
provide. This prevents a benchmark
from publishing a formula that was
never actually evaluated.

The formula strings are assembled from
monitor parameters for the tolerance
sweep. When tolerance changes, the
formulas change and are parsed again.
This keeps the string artifact aligned
with the policy used for scoring.

The parser is an executable front end
for finite trace monitors. Its role is
to map source formulas into
monitorable syntax trees with stable
spans and precedence.

Class indexed traces are handled at
the environment layer. The same parsed
formula is evaluated once per class
and then aggregated by macro
averaging. This keeps the grammar
stable while allowing polyphonic
benchmark items.

The theory helper builds finite truth
signatures from the same parser. It
checks satisfiability, formula
equivalence, clause signatures,
temporal radius, and duplicate
minimization without a second
semantics.

\section{Typed polyphonic traces}
The benchmark now carries both a union
activity trace and class indexed
traces. The class set is speech, tone,
and burst. Each item stores a binary
mask for every class, the union mask
used by the older binary metrics, and
class specific interval lists.

Polyphony is generated directly. A
clean item can contain overlapping
labeled events from different classes.
The waveform can also contain
unlabeled distractor events placed
outside the target intervals. These
distractors test whether a detector
confuses audible activity with labeled
activity.

The monitor evaluates the same parsed
guards on each class trace. The
resulting class guard values are macro
averaged. This is a stricter contract
than union activity because a detector
must preserve both time and class
identity.

The contract aware detector is trained
with class indexed frame heads and
class indexed boundary heads. Its
binary prediction is the union of its
class predictions. Its typed
prediction is the full matrix of frame
by class decisions.

This design avoids duplicating the
grammar. The formula
\texttt{ref\_onset -> N[0.04]
pred\_onset} has the same syntax for
every class. The environment changes
from a single trace pair to a class
indexed family of trace pairs.

The typed layer exposes errors that
the union trace hides. A detector can
mark the right time while assigning
the wrong class. It can also collapse
two overlapping classes into one broad
activity segment. The macro class
report separates these cases from
ordinary boundary displacement.

Table \ref{tab_multiclass} reports the
class macro contract for the contract
aware detector. The macro guard
profile remains in the artifact. These
outputs are generated by the same
benchmark run as the binary tables.

\begin{table}[H]
\centering
\small
\begin{tabular}{lrrrr}\toprule
Condition & Frame F1 & Edge F1 & Boundary F1 & Logic \\\midrule
clean & 0.735 & 0.767 & 0.706 & 0.769 \\
noise & 0.680 & 0.704 & 0.616 & 0.622 \\
reverb & 0.563 & 0.668 & 0.446 & 0.474 \\
clipping & 0.674 & 0.744 & 0.654 & 0.701 \\
gain\_drift & 0.695 & 0.701 & 0.653 & 0.747 \\
jitter & 0.675 & 0.728 & 0.592 & 0.670 \\
compound & 0.521 & 0.630 & 0.444 & 0.479 \\
\bottomrule\end{tabular}

\caption{Class macro scores for the contract aware detector.}
\label{tab_multiclass}
\end{table}
The class macro table is stricter than
the union table. The generated figure
and typed guard table remain in the
artifact.

\section{Guard distances}
The Boolean guards can be supplemented
by distances that explain how far a
violation is from satisfaction. This
paper reports onset and offset mean
absolute error in milliseconds, but
the same structure supports finer
distances.

For a reference onset frame, the onset
distance is the nearest predicted
onset distance in time. The parsed
onset guard thresholds this distance
at the chosen tolerance. The real
valued distance preserves the margin
that the Boolean guard discards.

For silence, the witness distance is
the nearest reference active frame to
each predicted active frame. The
silence guard thresholds this distance
at a smaller radius. Large values
identify predicted activity that is
not only late but isolated from any
reference support.

For duration, the distance is the
absolute difference between matched
interval lengths. For fragmentation,
the distance can be the number of
extra prediction intervals covering
the reference support.

These distances are not used to
replace the guard vector. They are
witnesses. A guard value says how
often an obligation was satisfied. A
distance says how severe the
violations were.

The distinction is important under
reverberation. A detector may fail
offset guard on many events, yet most
failures may be small. Another
detector may fail fewer offsets but
with longer tails. The guard and the
distance rank these cases differently.

The benchmark reports onset, offset,
duration, and fragmentation witnesses
alongside parsed satisfaction.
Additional witness distances extend
the reporting map after monitoring.

The mathematical object is therefore a
pair consisting of a satisfaction
vector and a distance report. The
first is Boolean averaged. The second
is real valued and task dependent.

The soft edge CSV files remain in the
artifact, but the main text keeps the
emphasis on contract satisfaction and
ablation evidence.

\begin{definition}[Violation witness]
For a guard $g$ with obligation set $O_g$, a witness map assigns each failed obligation $o \in O_g$ a nonnegative real value $d_g(o)$.
The value is zero only when the local obligation is satisfied.
\end{definition}

\begin{proposition}[Threshold recovery]
For onset and offset guards, Boolean satisfaction at tolerance $\epsilon$ is recovered by thresholding the nearest edge distance at $\epsilon$.
\end{proposition}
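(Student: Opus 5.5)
Unfold the onset guard at tolerance $\epsilon$ as the formula \texttt{ref\_onset -> N[$\epsilon$] pred\_onset} scored under the obligation \texttt{ref\_onset}, and compare it frame by frame with the thresholded distance. By the Vacuity control lemma, only reference onset frames $i$ are sampled, and on those frames the implication has the truth value of $\sem{N_\epsilon\,\texttt{pred\_onset}}_E(i)$. The Frame semantics definition says this value is true exactly when some predicted onset frame $j$ satisfies $0\le j<n$ and $|i-j|\le r_\epsilon$, where $r_\epsilon=\lceil \epsilon/h\rceil$. The offset guard is handled the same way, with offset atoms in place of onset atoms.

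Next, define the nearest edge distance on the grid as $d(i)=h\cdot\min\{|i-j| : j \text{ a predicted onset frame}\}$, with $d(i)=+\infty$ when the prediction has no onset. Since frame centers are $(i+1/2)h$, $d(i)$ is the physical distance between frame centers. The minimum over a finite set is attained, so $d(i)\le r_\epsilon h$ holds exactly when some predicted onset lies within $r_\epsilon$ frames of $i$. Hence the Boolean guard at frame $i$ equals the indicator $[\,d(i)\le r_\epsilon h\,]$. Averaging this indicator over the obligation frames recovers the guard value, and the obligation-empty convention (value one) is the same on both sides. This also matches the Violation witness definition: take $d_g(o)=d(i)$ on failed obligations and $0$ on satisfied ones.

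The main obstacle is the threshold mismatch between $\epsilon$ and $r_\epsilon h$. Because $d(i)$ is an integer multiple of $h$, I would show that $d(i)\le\epsilon$ is equivalent to $d(i)\le r_\epsilon h$. If $d(i)=kh\le\epsilon$, then $k\le \epsilon/h$, and since $k$ is an integer, $k\le\lceil\epsilon/h\rceil$. Conversely, $kh\le\lceil\epsilon/h\rceil h$ does \emph{not} imply $kh\le\epsilon$ when $\epsilon/h$ is not an integer. So the recovery should be stated with the grid-projected threshold $r_\epsilon h$, equivalently "thresholding at $\epsilon$ after ceiling projection", which is the paper's declared conservative convention. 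Boundary clipping needs no extra argument, since $d$ minimizes only over existing frames $0\le j<n$, exactly as $N_\epsilon$ does. Finally, I would note that the argument uses neither the matcher nor Frame extensionality beyond the fact that the atoms are fixed arrays, so no interval matching issue arises.
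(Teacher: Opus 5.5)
Your proof follows the paper's own argument in more detail: the paper says only that the edge guard holds at an obligated reference edge exactly when a predicted edge lies in the $N_{\epsilon}$ window, and that this is equivalent to the nearest predicted edge distance being at most $\epsilon$. Your caveat is correct and catches a point the paper's second step glosses over: grid distances are multiples of $h$ and the window radius is $r_\epsilon=\lceil \epsilon/h\rceil$, so the guard is the indicator of $d(i)\le r_\epsilon h$, which coincides with $d(i)\le\epsilon$ only when $\epsilon$ is an integer multiple of $h$ (true for every tolerance the paper actually uses, 20 to 160 ms at $h=20$ ms), and in general the recovery threshold must be the grid-projected $r_\epsilon h$.
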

\begin{proof}
The parsed edge guard is true at an obligated reference edge exactly when a predicted edge occurs in the $N_{\epsilon}$ neighborhood.
That condition is equivalent to the nearest predicted edge distance being at most $\epsilon$.
\end{proof}
\section{Robustness interpretation}
The benchmark uses robustness as a
contract statement. A detector is
robust for a perturbation when the
same trace obligations remain
satisfied after that perturbation is
applied.

Noise robustness is measured by
whether silence and spurious guards
remain high when additive background
energy rises. A threshold detector can
lose this form of robustness even if
it still finds parts of active events.

Reverberation robustness is measured
by whether offsets, duration, and
silence remain controlled under
temporal smearing. This perturbation
turns an acoustic tail into a boundary
ambiguity.

Jitter robustness is measured by the
tolerance sweep as much as by the base
score. A shifted trace may be good
under loose tolerance and poor under
strict tolerance. Reporting only one
tolerance would hide that dependence.

The guard vector also disciplines
robustness language. A robust result
is tied to a perturbation, a guard,
and a tolerance rather than to a
single aggregate.

The benchmark is designed so that
robustness statements are falsifiable.
Every claim can be traced to a CSV
row, a formula string, and a monitor
output.

\section{Real soundscape evidence}
The benchmark is speech seeded and
interval exact. It supports claims
about finite trace evaluation,
controlled perturbation behavior, and
detector failure decomposition.

The MAESTRO Real protocol tests the
monitor on real soundscape evidence
inside the paper. Real intervals
replace controlled intervals while the
parser, monitor, tolerance sweep,
guard table, and file level split
remain the same.

The finite logic is single channel but
no longer single label. The benchmark
evaluates union traces and class
indexed traces for overlapping speech,
tone, and burst events. External
corpora can replace the generator
while preserving the typed trace
interface.

The local benchmark gives evidence
that trace contracts reveal
distinctions hidden by aggregate
scores. DCASE and DESED style files
can be attached at the trace loader
layer when task specific annotation
policies are needed.

Speech seeded scenes supply controlled
logical counterexamples to shallow
aggregate scoring. MAESTRO Real
supplies domain evidence with
annotation uncertainty and real
acoustic mixtures.

\section{Practical deployment}
A challenge organizer can use the
contract language without changing
submitted systems. Each submission is
decoded to a trace. The organizer
publishes the formula file, event
clauses, tolerance grid, matcher
policy, and aggregation weights before
scoring.

The default guard set should not be
treated as universal. It is a starter
contract for boundary gates where
missing activity, false activity,
protected silence, duration, and
fragmentation all matter. A task with
strict trigger latency should add an
asymmetric latency clause. A task with
dense polyphony should add overlap
purity.

Tolerance sweep is mandatory when the
task has ambiguous boundary labels or
delayed acoustic evidence. A single
tolerance is adequate only when the
annotation protocol already fixes the
acceptable edge radius.

The monitor can also be used during
model development. A developer can
inspect the guard vector on validation
traces and decide whether to repair
onset delay, release tails, silence
bleed, or event fragmentation. This is
different from tuning only one frame
F1 threshold.

The MAESTRO Real result gives the main
warning for practical SED. Union
activity may look solved while typed
boundary evidence remains weak. A
challenge report should therefore
publish union and class indexed
contract scores when class identity
matters.

A practical benchmark run has four
steps. Convert submissions to frame
and interval traces. Publish the
contract file and matcher policy.
Report the guard vector and the
standard metric table on identical
traces. Publish the tolerance sweep
when annotations are uncertain.

\begin{table}[H]
\centering
\small
\resizebox{0.94\linewidth}{!}{
\begin{tabular}{lll}
\toprule
Use case & High weight coordinates & Operational reading \\\midrule
Speech gate & missing, spurious, silence, onset & avoid missed speech and false gate opens \\
Retrieval segmenter & duration, fragmentation, offset & return complete and compact segments \\
Neural decoding alignment & onset, offset, duration & preserve temporal anchors for decoded audio \\
SED challenge report & class macro missing, class macro spurious, overlap purity & expose typed errors hidden by union activity \\
Acoustic front end & silence, offset, fragmentation & avoid tails and unstable event decomposition \\
\bottomrule
\end{tabular}}
\caption{Example application weights over contract coordinates.}
\label{tab_application_weights}
\end{table}
\section{Reproducibility record}
The reported results are generated by
scripts in the ancillary benchmark
directory. The source package contains
a license file for ancillary code,
environment files, downloader scripts,
parser and monitor code, proof files,
generated CSV rows, generated table
inputs, and figure images.

The record includes dataset seeds,
train and test sizes, perturbation
names, model epoch counts, threshold
procedures, validation decode
calibration, MAESTRO source file
splits, frozen encoder probe sizes,
hardware, and runtime files for the
real corpus rerun.

The split files are not implicit in
the text. Synthetic items are
generated from the recorded seed.
MAESTRO Real uses source file
partitions in the manifest. The frozen
encoder probe stores public model
identifiers and cache locations in its
manifest.

External speech waveforms, MAESTRO
audio, and pretrained model weights
are not bundled. The scripts record
the public source identifiers and
cache targets needed to rebuild those
assets. The source digest in the table
is a stable local record when no
public commit hash has been supplied.

The benchmark scripts write CSV detail
files before LaTeX tables. Tables in
the manuscript are therefore views
over stored rows, not hand entered
numbers.

Manuscript tables are loaded from
files named
\texttt{results/table\_*.tex}. Figures
not printed in the manuscript remain
as PNG files under \texttt{figures}.
This keeps the paper readable while
preserving the generated evidence.

The reproducibility manifest includes
a source revision field. A source
digest is recorded when a git revision
is not present. Environment files,
data download scripts, and random seed
manifests are part of the ancillary
source.

\begin{table}[H]
\centering
\small
\resizebox{0.92\linewidth}{!}{\begin{tabular}{ll}\toprule
Item & Value \\\midrule
Source revision & source digest 245ba9388f6871a7 \\
Environment files & anc/benchmark/environment.yml and anc/benchmark/requirements.txt \\
License file & LICENSE for ancillary benchmark and proof code \\
Data download scripts & fetch\_librispeech.py, maestro\_real.py, sota\_zoo.py \\
External assets & LibriSpeech audio, MAESTRO Real audio and annotations, Hugging Face model weights \\
Seed manifests & manifest.json, seed\_robustness\_manifest.json, maestro\_real\_manifest.json \\
Synthetic seed & 20260514 \\
Synthetic split & 960 train and 420 test clips \\
Clip grid & 12.0 s clips at 0.02 s frames \\
Perturbations & clean, noise, reverb, clipping, gain\_drift, jitter, compound \\
Augmented views & 2880 training views \\
Threshold baselines & 80 point threshold grid over train quantiles \\
Logistic baseline & balanced lbfgs logistic regression with max iter 500 \\
Temporal CNN & 14 epochs, Adam, batch 24, fixed seed \\
Dilated CNN & 18 epochs, Adam, batch 32, gradient clipping \\
Contract TCN & 8 epochs, AdamW, validation decode calibration \\
Decode calibration & threshold, hysteresis, and edge snap grids on held out train split \\
Standard metric audit & one second segment F1 and 200 ms event collars on controlled and MAESTRO decoded traces \\
DCASE challenge baseline & official 2024 Task 4 checkpoint and MAESTRO postprocessed outputs from Zenodo \\
Metric neutrality audit & coordinate leaders on all detectors and clean non contract detectors \\
Controlled protocol table & dataset\_protocol\_report.py reads generator constants and manifest \\
Bootstrap uncertainty & 4000 paired draws over controlled test clips \\
Ablation inference & 4000 paired bootstrap draws with Holm correction across non baseline variants \\
Seed robustness & 3 scene model seed pairs, 360 train and 168 test clips each \\
Matcher stress track & 24 interval stress cases plus observed bridge probe and acoustic prediction audit \\
Contract profiles & balanced, support recall, edge timing, silence protection, event integrity \\
MAESTRO split & 727 train and 396 test clips by source file \\
Frozen encoder probe & 128 train and 70 test clips, projection dim 64 \\
MAESTRO runtime & 112.43 s \\
Controlled runtime & 1233.27 s \\
Artifact availability & artifact\_availability.py writes commands and external checksum table \\
Hardware & Apple M4 Pro, arm64 \\
\bottomrule\end{tabular}
}
\caption{Reproducibility record for the reported runs.}
\label{tab_reproducibility}
\end{table}
The artifact availability table is
deliberately separate from the result
table. It states what is present in
the local source package, what must be
assigned at public deposition, and
what external assets are rebuilt by
scripts.

\IfFileExists{results/table_artifact_availability.tex}{
\begin{table}[H]
\centering
\small
\resizebox{0.98\linewidth}{!}{\begin{tabular}{p{0.22\linewidth}p{0.38\linewidth}p{0.30\linewidth}}\toprule
Item & Value & Check \\\midrule
source identity & local source digest f2eef39d5d96975c & full digest in reproducibility manifest \\
public deposit & not assigned in the local manuscript package & replace with DOI or commit hash at deposition \\
license & LICENSE covers ancillary benchmark and proof code & manuscript and bibliography are excluded \\
environment & anc/benchmark/requirements.txt and anc/benchmark/environment.yml & Python dependencies declared \\
download scripts & fetch\_librispeech.py, maestro\_real.py, sota\_zoo.py & external waveforms and model weights rebuilt outside the source zip \\
cached behavior & scripts reuse archives, extracted audio, and frozen feature caches when present & no redownload when cache files exist \\
failure modes & public archive unavailable, ffmpeg missing, model card removed, authentication blocked & scripts fail explicitly rather than substituting unrelated data \\
\bottomrule\end{tabular}
}
\caption{Artifact availability statement for code, source identity, data, and failure modes.}
\label{tab_artifact_availability}
\end{table}
}{}
The command table records the intended
regeneration path. Each main table or
family of tables is generated by a
script before LaTeX reads it.

\IfFileExists{results/table_reproduction_commands.tex}{
\begin{table}[H]
\centering
\small
\resizebox{0.98\linewidth}{!}{\begin{tabular}{p{0.22\linewidth}p{0.68\linewidth}}\toprule
Target & Command \\\midrule
controlled benchmark & \texttt{PYTHONPATH=anc/benchmark python3}\newline \texttt{anc/benchmark/run\_benchmark.py} \\
ablation & \texttt{PYTHONPATH=anc/benchmark python3}\newline \texttt{anc/benchmark/ablation\_study.py} \\
matched regime & \texttt{PYTHONPATH=anc/benchmark python3}\newline \texttt{anc/benchmark/matched\_regime.py} \\
uncertainty & \texttt{PYTHONPATH=anc/benchmark python3}\newline \texttt{anc/benchmark/uncertainty\_audit.py} \\
selection and profiles & \texttt{PYTHONPATH=anc/benchmark python3}\newline \texttt{anc/benchmark/selection\_audit.py} \\
MAESTRO Real & \texttt{PYTHONPATH=anc/benchmark python3}\newline \texttt{anc/benchmark/maestro\_real.py --download-audio} \\
DCASE challenge baseline & \texttt{PYTHONPATH=anc/benchmark python3}\newline \texttt{anc/benchmark/dcase\_challenge\_baseline.py} \\
frozen encoder probe & \texttt{PYTHONPATH=anc/benchmark python3}\newline \texttt{anc/benchmark/sota\_zoo.py} \\
protocol tables & \texttt{PYTHONPATH=anc/benchmark python3}\newline \texttt{anc/benchmark/dataset\_protocol\_report.py} \\
artifact tables & \texttt{PYTHONPATH=anc/benchmark python3}\newline \texttt{anc/benchmark/artifact\_availability.py} \\
Lean checks & \texttt{lake env lean}\newline \texttt{anc/proofs/TraceLogic.lean} \\
manuscript & \texttt{python3 tools/write\_main\_v2.py \&\&}\newline \texttt{latexmk -pdf main.tex} \\
prose lint & \texttt{python3 anc/benchmark/lint\_prose.py}\newline \texttt{main.tex} \\
\bottomrule\end{tabular}
}
\caption{Exact reproduction commands for the generated benchmark and manuscript artifacts.}
\label{tab_reproduction_commands}
\end{table}
}{}
The checksum table records local
archive digests for large external
assets when those archives are cached.
Hugging Face model rows carry public
model URLs and manifest cache paths
because their snapshots are not
bundled in the source package.

\IfFileExists{results/table_external_asset_checksums.tex}{
\begin{table}[H]
\centering
\small
\resizebox{0.98\linewidth}{!}{\begin{tabular}{p{0.22\linewidth}p{0.30\linewidth}p{0.22\linewidth}p{0.16\linewidth}}\toprule
Asset & Identifier & SHA256 prefix & Status \\\midrule
Mini LibriSpeech archive & dev-clean-2.tar.gz & 176ec501490eced2 & matches expected \\
MAESTRO Real annotations & development\_annotation.zip & dc94b76ce76d5e4d & recorded from local archive \\
MAESTRO Real audio & development\_audio.zip & fa8126eb114e51d9 & recorded from local archive \\
DCASE 2024 baseline checkpoint & baseline\_pretrained\_2024.zip & 3e1525510bfcfefa & recorded from local archive \\
HF model wav2vec2\_base & see SOTA manifest & model id only & model URL and cache path in manifest \\
HF model wav2vec2\_conformer & see SOTA manifest & model id only & model URL and cache path in manifest \\
HF model ast\_audioset & see SOTA manifest & model id only & model URL and cache path in manifest \\
HF model htsat\_clap & see SOTA manifest & model id only & model URL and cache path in manifest \\
HF model beats\_as2m & see SOTA manifest & model id only & model URL and cache path in manifest \\
\bottomrule\end{tabular}
}
\caption{External archive checksums and model identifiers used by the local artifact.}
\label{tab_external_asset_checksums}
\end{table}
}{}
\section{Limitations}
The contract is task relative. The
reported guard set is justified by the
declared calibration cases and risk
profiles, not by a universal theorem
about all auditory benchmarks. A
different application can and should
publish a different risk order.

The controlled benchmark is not a
replacement for challenge submissions.
It is a diagnostic workload with real
speech segments, generated non speech
events, and controlled perturbations.
MAESTRO Real adds real soundscape
evidence, and the DCASE baseline track
supplies the external class indexed
challenge system.

The local class indexed detector is
kept as an ablation probe rather than
as the real SED reference. The
challenge level class reference is the
DCASE baseline output track.

The frozen encoder probe is not a
ranking of backbone quality. It
freezes representations and uses a
common projection and boundary head.
The DCASE baseline track is the
trained external SED comparison, while
the frozen probe remains a timing
representation audit.

The Lean development checks the finite
frame core, obligation counting,
neighborhood witness monotonicity, and
implication behavior. It does not
verify the full Python monitor, the
acoustic models, the event matcher
implementation, or the dataset
generator.

The tolerance sweep shows that policy
sensitivity remains visible. The
benchmark treats this as an audited
quantity rather than a solved nuisance
parameter. Reports using this contract
should publish tolerance curves when
boundary annotations are ambiguous.

External data and model weights are
rebuilt by download scripts rather
than bundled in the source package.
This keeps the package small but makes
reproducibility depend on public
archives and their access conditions.

\section{Extension operators}
The contract calculus admits direct
extension operators. A clause
extension appends a new coordinate to
the vector. A matcher extension
replaces the event matching relation
while keeping the same frame formulas.
A corpus extension replaces the trace
loader while keeping the monitor.

Latency is a clause extension. It uses
reference onsets as obligations and
checks that the first predicted onset
appears inside an asymmetric future
window. Continuity is also a clause
extension. It asks that predicted
activity have no short gaps inside a
matched reference event.

Overlap purity is an event clause over
class indexed traces. It checks
whether a predicted class interval is
covered by the same reference class
rather than only by union activity.

A probabilistic variant can replace
Boolean frame atoms by soft fields and
use robust satisfaction over the same
syntax. The present code keeps Boolean
formulas and reports soft boundary
witnesses beside them.

A differentiable variant can use
smooth relaxations of the Boolean
operators as a training loss. The
finite syntax and obligation masks
already supply the computation graph
structure.

Multi microphone scenes and moving
sources are trace extensions. They add
channel or spatial atoms to the
environment and leave the tokenizer,
parser, and bounded monitor synthesis
intact.

The present artifact therefore defines
a base calculus plus extension points.
The paper evaluates the base calculus
on controlled scenes, real
soundscapes, and frozen pretrained
representations.

\section{Conclusion}
Executable finite trace contracts turn
auditory boundary evaluation into a
checkable object. They give a
benchmark a language for onsets,
offsets, missing events, spurious
events, silence, duration, and
fragmentation.

The contribution is a domain specific
executable benchmark calculus. It is
not a new general temporal logic and
it does not replace STL or runtime
verification.

The accompanying benchmark shows that
this contract is complementary to
frame F1 and boundary F1.
Perturbations can preserve broad
overlap while breaking the trace
semantics that downstream systems
depend on.

The selected contract is task
relative. Given a calibration set, a
risk order, and a bounded candidate
basis, the contract is chosen by
finite separation and duplicate
minimization rather than by an
informal guard list.

The empirical link is visible in the
selection audit and the MAESTRO Real
gap. Different contract coordinates
choose different detectors, and union
activity on real soundscapes can hide
a large typed boundary failure.

The value of the work is the
separation it enforces. Acoustic
models produce traces. Parsed formulas
state obligations. Monitors compute
satisfaction. Tables and figures
report the consequences.

That separation is reproducible enough
for audit and narrow enough to keep
boundary evaluation from becoming an
undeclared modeling convention.

\begingroup
\linespread{1}\selectfont
\bibliographystyle{abbrvnat}
\bibliography{refs}
\endgroup
\end{document}